\newcommand{\vast}{\bBigg@{4}}
\newcommand{\Vast}{\bBigg@{5}}
\newtheorem{theorem}{Theorem}[]
\newtheorem{lemma}[]{Lemma}
\newtheorem{corollary}{Corollary}[theorem]
\theoremstyle{definition}
\begin{document}
\bstctlcite{IEEEexample:BSTcontrol}
\title{Multi-RIS Communication Systems: Asymptotic analysis of best RIS selection for i.n.i.d. Random Variables using Extreme Value Theory  }
	\author{ Srinivas Sagar and  Sheetal Kalyani \\
      \thanks{\hspace{-0.7cm} \\
The authors are with the Department of Electrical Engineering, Indian Institute of Technology Madras, India. Emails: \{ee21d051@smail, 
skalyani@ee\}.iitm.ac.in}}

	\maketitle
	\begin{abstract}
		\textcolor{black}{This paper investigates the performance of multiple reconfigurable intelligent surfaces (multi-RIS) communication systems where the RIS link with the highest signal-to-noise-
        ratio (SNR) is selected at the destination. In practice, all the RISs will not have the same number of reflecting elements. Hence, selecting the RIS link with the highest SNR will involve characterizing the distribution of the maximum of independent, non-identically distributed (i.n.i.d.) SNR random variables (RVs).  Using extreme value theory (EVT), we derive the asymptotic distribution of the normalized maximum of i.n.i.d. non-central chi-square (NCCS) distributed SNR RVs with one degree of freedom (d.o.f) and then extend the results for $k$-th order statistics. Using these asymptotic results, the outage capacity and average throughput expressions are derived for the multi-RIS system.} The results for independent and identically
        distributed (i.i.d.) SNR RVs are then derived as a special case of i.n.i.d. RVs. All the derivations are validated through extensive Monte Carlo simulations, and their utility is discussed. 
	\end{abstract}
 
	\begin{IEEEkeywords}
		reflecting intelligent surfaces, extreme value theory, non-central chi-square, independent, non-identically distributed random variables 
	\end{IEEEkeywords}
	
	\section{Introduction}
 Reconfigurable intelligent surfaces (RISs) have gained significant popularity in the last few years \cite{wu2019intelligent,wu2019beamforming,abeywickrama2020intelligent,pan2020multicell,pan2020intelligent,zhang2020capacity,zhou2020robust}. Various applications such as beamforming \cite{wu2019intelligent, di2020hybrid, wu2020joint}, 
 multiple input multiple output (MIMO) \cite{pan2020multicell,pan2020intelligent}, deep learning \cite{huang2020reconfigurable},
 unmanned aerial vehicles (UAVs) \cite{li2020reconfigurable}, simultaneous wireless information and power transfer (SWIPT) system \cite{pan2020intelligent}, non-orthogonal multiple access (NOMA) \cite{zheng2020intelligent}, and millimeter wave (mmWave) systems \cite{{gopi2020intelligent}} now use the RIS. 
 \par RIS-aided communication can be grouped into two categories: single-RIS and multi-RIS systems. Performance analysis of single-RIS communication systems is extensively studied in \cite{wu2019intelligent,di2020hybrid,wu2020joint,pan2020multicell,zhang2020capacity,zhou2020robust,huang2020reconfigurable, taha2021enabling,yang2020deep,charishma2021outage, subhash2023max, subhash2023optimal,shekhar2022instantaneous}. To improve the system performance works like \cite{wu2019intelligent,di2020hybrid,wu2020joint} presented single RIS communication systems in multi-antenna transmitter \cite{wu2019intelligent}, multiuser communication  \cite{di2020hybrid},  and SWIPT  \cite{wu2020joint} to minimize the transmit power,  hybrid beamforming, active and passive beamforming, respectively.  

 The results of \cite{basar2019wireless} show that, in an  RIS-aided communications system, received SNR at the destination can be modeled as non-central chi-square (NCCS) distribution with one degree of freedom (d.o.f) where the parameters $\left (\lambda =\left ( \frac{N\pi }{4} \right )^{2}, \sigma ^{2}=N\left ( 1-\frac{\pi ^{2}}{16} \right )\right )$ only depend on the number of reflecting elements $(N)$ of RIS. The authors of \cite{yang2020coverage} introduced the quantitative analysis of coverage area. Here, the source communicates to the destination through a single RIS. 

\par In a multi-RIS system, multiple RIS links are available between the source and destination along with a direct link. Works like \cite{do2021multi,phan2022performance,xie2022downlink,tran2022exploiting,tran2022combining,agbogla2023adaptive,nguyen2023secrecy,nguyen2023performance} used the direct link and reflected links from all the RISs in deriving the outage probability and throughput analysis of multi-RIS systems. Authors of \cite{do2021multi} considered the statistical characterization of exhaustive RIS-aided (transmit the RIS signal along with a direct signal) and opportunistic RIS-aided (only the best RIS along with a direct signal) systems. Multi-RIS systems are explored in several communications applications like cooperative communication  \cite{phan2022performance}, single cell networks with backhaul capacity \cite{xie2022downlink}, cooperative RIS, and opportunistic RIS methods for terahertz communication systems \cite{agbogla2023adaptive}. Also, the performance analysis of a multi-RIS system was presented in \cite{tran2022exploiting, tran2022combining,nguyen2023secrecy }. The authors of \cite{nguyen2023performance} derived outage probability expression for the UAV-NOMA-mmWave system with multiple RISs. Multi-RIS systems are also useful when no direct link is available between source and destination. Several works \cite{fang2022optimum,aldababsa2023multiple,hindustani2023outage,tam2023improving} select the RIS link, giving the highest SNR at the destination. The authors derived outage probability and throughput expressions using opportunistic RIS selection for SISO \cite{hindustani2023outage} and mmWave \cite{tam2023improving} systems.
\par The multi-RIS system considered in \cite{yang2020outage} assumed that all the RISs would have the same number of reflecting elements leading to i.i.d. SNR links. Then, the authors used EVT to characterize the order statistics of the received SNR distribution. But in practice, the number of reflecting elements in each RIS will be different, leading to i.n.i.d. SNR RVs. The primary motivation of our paper is to characterize the order statistics of i.n.i.d  NCCS RVs with one d.o.f and use that to select the best RIS among the multiple RISs.
\par
The primary focus of EVT is the statistical characterization of extreme (maximum/minimum) values. EVT results have been extensively used in the fields of communication like multiuser diversity \cite{park2008performance,seo2009new}, cognitive radio (CR) \cite{xia2013spectrum,haider2015spectral}, relays \cite{xia2013spectrum}, MIMO, ultra-reliable and low-latency communication (URLLC) \cite{liu2019dynamic, samarakoon2019distributed}, and machine learning. 
In most scenarios, finding the exact CDF of the maximum order statistics, i.e., $\prod_{r=1}^{R}F_{r}\left ( \gamma  \right )$ leads to very complicated expressions for large $R$. 
Several authors used EVT to characterize the maximum order statistic since this leads to mathematically tractable expressions.  
In multiuser diversity systems, throughput analysis is carried out asymptotically with the help of EVT \cite{park2008performance,seo2009new}. CR systems used EVT to find the limiting distribution of end-to-end SNR \cite{xia2013spectrum},  to analyze spectrum and energy efficiency \cite{haider2015spectral}, and for optimum power allocation \cite{subhash2020transmit}.  Also, in URLLC systems, EVT has been used to characterize the tail distribution of queue length \cite{liu2019dynamic, samarakoon2019distributed}. The works in \cite{kalyani2012asymptotic, subhash2019asymptotic} presented the asymptotic distribution of maximum order statistics for i.i.d. sums of non-identical gamma RVs \cite{kalyani2012asymptotic} and $\kappa-\mu$ shadow fading RVs \cite{subhash2019asymptotic} respectively.
\par To the best of our knowledge, while EVT has been used extensively in characterizing communication systems, the focus has been on i.i.d. RVs. Characterization of the asymptotic distribution of order statistics of i.n.i.d. RVs \cite{mejzler1969some,barakat2013limit} is mathematically more complicated than the characterization of i.i.d. RVs. Our work in \cite{subhash2021cooperative} was the first to consider i.n.i.d. RVs in the context of an opportunistic relaying system with the SWIPT network.
We then further derived the order statistics of i.n.i.d. Rician fading RVs \cite{subhash2022asymptotic}. From \cite{basar2019wireless}, we can observe that end-to-end SNR in RIS-aided communication system can be modeled as NCCS RV with one d.o.f, so we would like to characterize the order statistics of NCCS RV with one d.o.f. Given the order statistics of RVs, many applications (selection diversity, relay selection, antenna selection) in communications select the maximum order statistic for communication. Sometimes, the best selection/maximum order statistic may not be available for communication, in such scenarios, $k$-th best link can be selected for communication.
Hence, we would also like to study the order statistics of NCCS RV with one d.o.f to characterize the end-to-end SNR of a multi-RIS system. Now, we present the main contributions of this paper.
\begin{enumerate}
    \item We present the performance analysis of a multi-RIS system with the help of EVT. Modeling the end-to-end SNR of a RIS-aided communication system as NCCS RV with one d.o.f, we first derive the asymptotic distribution of the normalized maximum of $R$ i.n.i.d. NCCS RVs with the help of EVT. The asymptotic distribution of $k$-th maximum of i.n.i.d. NCCS RVs is also derived.
    \item Assumption of the same number of reflecting elements for all the RISs gives us i.i.d. NCCS RVs. So, the asymptotic distribution of $k$-th order statistics for $R$ i.i.d. NCCS RVs with one d.o.f is also derived as a special case of i.n.i.d. RVs.
    \item Using the asymptotic distribution of maximum order statistics of SNR RV, average throughput and outage capacity expressions are derived for RIS-aided communication systems considering multiple RISs. Stochastic ordering results for the normalized $k$-th maximum SNR RV are also presented.
\end{enumerate}
\par
The organization of the paper is as follows. Section II introduces the considered system model for RIS-aided communication systems. Section III presents the results of $k$-th maximum order statistics of NCCS RVs with one d.o.f and derives the average throughput and outage capacity expressions for RIS-aided communication systems in a multi-RIS scenario. Section IV provides extensive simulation results to support our theoretical analysis, and Section V concludes the paper.
\par
The following notations are used in the paper. The probability density function and cumulative distribution functions of an RV $X$ are denoted by $f_{X}\left ( . \right )$ and $F_{X}\left ( . \right )$, respectively. The expectation of RV $X$ is denoted as $\mathbb{E}\left ( X \right )$. Given an event $A$, $\mathbb{P}\left ( A \right )$ denotes the probability of the event $A$.
    \section{System Model}
    In this work, we consider a RIS-aided wireless communication system model as shown in Fig. \ref{fig:sys}. It has a source (S), destination (D), and R number of RISs. The source and destination have a single antenna, and the $r^{th}$ RIS has $N_r$ reflecting elements. Different RISs can have different numbers of reflecting elements. Similar to \cite{yang2020outage}, we assume that there is no direct link between the source and destination due to the outage. Here, RISs act as passive reflectors between the source and destination and improve the quality of the signal at the receiver. Each RIS will reflect the signal transmitted by the source to the destination, so there are $R$ links available at the receiver for processing.  Similar to \cite{do2021multi,yang2020outage,yildirim2020modeling,galappaththige2020performance}, we assume that each RIS is controlled to steer their beam to the destination, avoiding interference with each other. The link with the highest quality is selected for communication between the source and destination in an opportunistic multi-RIS environment \cite{yang2020outage}.

    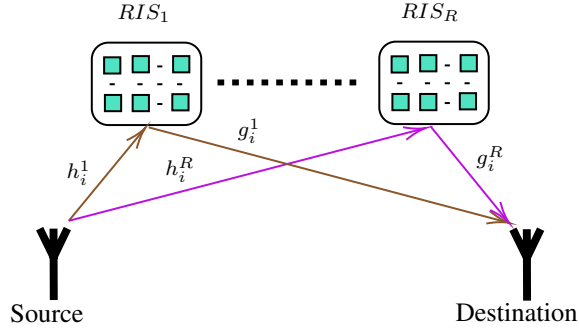
\begin{figure}[h]
        \centering        
        \tikzset{every picture/.style={line width=0.75pt}} 
        
        \begin{tikzpicture}[x=0.75pt,y=0.75pt,yscale=-1,xscale=1]
        
        \draw  [line width=3]  (143.3,153.8) -- (150.34,168.19) -- (157.28,153.75) (150.29,153.77) -- (150.42,189.82) ;
        \draw  [line width=3]  (382,154.1) -- (389.04,168.49) -- (395.98,154.05) (388.99,154.07) -- (389.12,190.12) ;
        \draw   (169.6,68.22) .. controls (169.6,63.62) and (173.32,59.9) .. (177.92,59.9) -- (216.48,59.9) .. controls (221.08,59.9) and (224.8,63.62) .. (224.8,68.22) -- (224.8,93.18) .. controls (224.8,97.78) and (221.08,101.5) .. (216.48,101.5) -- (177.92,101.5) .. controls (173.32,101.5) and (169.6,97.78) .. (169.6,93.18) -- cycle ;
        \draw  [fill={rgb, 255:red, 80; green, 227; blue, 194 }  ,fill opacity=1 ] (175.8,67.3) -- (184,67.3) -- (184,75.5) -- (175.8,75.5) -- cycle ;
        \draw  [fill={rgb, 255:red, 80; green, 227; blue, 194 }  ,fill opacity=1 ] (190.2,66.7) -- (198.4,66.7) -- (198.4,74.9) -- (190.2,74.9) -- cycle ;
        \draw  [fill={rgb, 255:red, 80; green, 227; blue, 194 }  ,fill opacity=1 ] (210.6,66.7) -- (218.8,66.7) -- (218.8,74.9) -- (210.6,74.9) -- cycle ;
        \draw  [fill={rgb, 255:red, 80; green, 227; blue, 194 }  ,fill opacity=1 ] (176.2,86.3) -- (184.4,86.3) -- (184.4,94.5) -- (176.2,94.5) -- cycle ;
        \draw  [fill={rgb, 255:red, 80; green, 227; blue, 194 }  ,fill opacity=1 ] (190.2,86.3) -- (198.4,86.3) -- (198.4,94.5) -- (190.2,94.5) -- cycle ;
        \draw  [fill={rgb, 255:red, 80; green, 227; blue, 194 }  ,fill opacity=1 ] (210.2,85.9) -- (218.4,85.9) -- (218.4,94.1) -- (210.2,94.1) -- cycle ;
        \draw    (205.2,71.9) -- (202.6,71.9) ;
        \draw    (205.2,90.9) -- (202.6,90.9) ;
        \draw    (181.2,80.9) -- (178.6,80.9) ;
        \draw    (195.2,80.9) -- (192.6,80.9) ;
        \draw    (215.2,80.9) -- (212.6,80.9) ;
        \draw    (205.2,80.9) -- (202.6,80.9) ;
        
        \draw   (314.6,67.72) .. controls (314.6,63.12) and (318.32,59.4) .. (322.92,59.4) -- (361.48,59.4) .. controls (366.08,59.4) and (369.8,63.12) .. (369.8,67.72) -- (369.8,92.68) .. controls (369.8,97.28) and (366.08,101) .. (361.48,101) -- (322.92,101) .. controls (318.32,101) and (314.6,97.28) .. (314.6,92.68) -- cycle ;
        \draw  [fill={rgb, 255:red, 80; green, 227; blue, 194 }  ,fill opacity=1 ] (320.8,66.8) -- (329,66.8) -- (329,75) -- (320.8,75) -- cycle ;
        \draw  [fill={rgb, 255:red, 80; green, 227; blue, 194 }  ,fill opacity=1 ] (335.2,66.2) -- (343.4,66.2) -- (343.4,74.4) -- (335.2,74.4) -- cycle ;
        \draw  [fill={rgb, 255:red, 80; green, 227; blue, 194 }  ,fill opacity=1 ] (355.6,66.2) -- (363.8,66.2) -- (363.8,74.4) -- (355.6,74.4) -- cycle ;
        \draw  [fill={rgb, 255:red, 80; green, 227; blue, 194 }  ,fill opacity=1 ] (321.2,85.8) -- (329.4,85.8) -- (329.4,94) -- (321.2,94) -- cycle ;
        \draw  [fill={rgb, 255:red, 80; green, 227; blue, 194 }  ,fill opacity=1 ] (335.2,85.8) -- (343.4,85.8) -- (343.4,94) -- (335.2,94) -- cycle ;
        \draw  [fill={rgb, 255:red, 80; green, 227; blue, 194 }  ,fill opacity=1 ] (355.2,85.4) -- (363.4,85.4) -- (363.4,93.6) -- (355.2,93.6) -- cycle ;
        \draw    (350.2,71.4) -- (347.6,71.4) ;
        \draw    (350.2,90.4) -- (347.6,90.4) ;
        \draw    (326.2,80.4) -- (323.6,80.4) ;
        \draw    (340.2,80.4) -- (337.6,80.4) ;
        \draw    (360.2,80.4) -- (357.6,80.4) ;
        \draw    (350.2,80.4) -- (347.6,80.4) ;
        
        \draw [color={rgb, 255:red, 139; green, 87; blue, 42 }  ,draw opacity=1 ]   (158,149.83) -- (195.4,104.38) ;
        \draw [shift={(196.67,102.83)}, rotate = 129.44] [color={rgb, 255:red, 139; green, 87; blue, 42 }  ,draw opacity=1 ][line width=0.75]    (10.93,-3.29) .. controls (6.95,-1.4) and (3.31,-0.3) .. (0,0) .. controls (3.31,0.3) and (6.95,1.4) .. (10.93,3.29)   ;
        \draw [color={rgb, 255:red, 189; green, 16; blue, 224 }  ,draw opacity=1 ]   (158,149.83) -- (336.06,103.5) ;
        \draw [shift={(338,103)}, rotate = 165.42] [color={rgb, 255:red, 189; green, 16; blue, 224 }  ,draw opacity=1 ][line width=0.75]    (10.93,-3.29) .. controls (6.95,-1.4) and (3.31,-0.3) .. (0,0) .. controls (3.31,0.3) and (6.95,1.4) .. (10.93,3.29)   ;
        \draw [color={rgb, 255:red, 139; green, 87; blue, 42 }  ,draw opacity=1 ]   (198.33,102.67) -- (378.07,150.32) ;
        \draw [shift={(380,150.83)}, rotate = 194.85] [color={rgb, 255:red, 139; green, 87; blue, 42 }  ,draw opacity=1 ][line width=0.75]    (10.93,-3.29) .. controls (6.95,-1.4) and (3.31,-0.3) .. (0,0) .. controls (3.31,0.3) and (6.95,1.4) .. (10.93,3.29)   ;
        \draw [color={rgb, 255:red, 189; green, 16; blue, 224 }  ,draw opacity=1 ]   (340.33,102) -- (378.74,149.28) ;
        \draw [shift={(380,150.83)}, rotate = 230.91] [color={rgb, 255:red, 189; green, 16; blue, 224 }  ,draw opacity=1 ][line width=0.75]    (10.93,-3.29) .. controls (6.95,-1.4) and (3.31,-0.3) .. (0,0) .. controls (3.31,0.3) and (6.95,1.4) .. (10.93,3.29)   ;
        \draw [line width=2.25]  [dash pattern={on 2.53pt off 3.02pt}]  (233,80) -- (306,80) ;
        
        \draw (127,190.5) node [anchor=north west][inner sep=0.75pt]  [font=\small] [align=left] {Source};
        \draw (351.67,189.67) node [anchor=north west][inner sep=0.75pt]  [font=\small] [align=left] {Destination};
        \draw (155.5,118) node [anchor=north west][inner sep=0.75pt]  [font=\scriptsize]  {$h_{i}^{1}$};
        \draw (205.33,116) node [anchor=north west][inner sep=0.75pt]  [font=\scriptsize]  {$h_{i}^{R}$};
        \draw (241.8,96.8) node [anchor=north west][inner sep=0.75pt]  [font=\scriptsize]  {$g_{i}^{1}$};
        \draw (362.2,110.6) node [anchor=north west][inner sep=0.75pt]  [font=\scriptsize]  {$g_{i}^{R}$};
        \draw (181,39) node [anchor=north west][inner sep=0.75pt]  [font=\scriptsize]  {$RIS_{1}$};
        \draw (324,38.5) node [anchor=north west][inner sep=0.75pt]  [font=\scriptsize]  {$RIS_{R}$};

        \end{tikzpicture}
    \caption{System Model}
    \label{fig:sys}
    \end{figure}
     Let $h_{i}^r$  and $g_{i}^r$ represent the channel fading coefficients between the source to $i^{th}$ reflecting element of $r^{th}$ RIS, and $i^{th}$ reflecting element of $r^{th}$ RIS to destination respectively. Also, all the channels are assumed to undergo independent Rayleigh fading. Let $x$ be the transmitted signal, then the received signal at the destination reflected from $r^{th}$ RIS is given by
    \begin{equation}\label{one}
        y^{r}=\sqrt{P_{s}}\left [ \sum_{i=1}^{N_{r}}h_{i}^{r}\exp \left ( j\phi _{i}^r \right )g_{i}^{r} \right ]x+n^r,
    \end{equation}
    where $P_s$ is the source transmit power and $n^r$ is the additive white Gaussian noise (AWGN) with mean zero and variance $V_0$. Assume $d_{sr}$ and $d_{rd}$ are the distances between source to $r^{th}$ RIS and  $r^{th}$ RIS to destination, respectively. The small-scale fading channel gains are given by $h_{i}^{r}=\eta_{i}^{r}e^{-j\theta _{i}^{r}}$ and $g_{i}^{r}=\beta_{i}^{r}e^{-j\psi_{i}^{r}}$. Here  $\eta_{i}^{r}$, $\theta _{i}^{r}$ represent the channel amplitude and phase, respectively, for the link between source and $r^{th}$ RIS. Similarly, $\beta_{i}^{r}$, $\psi_{i}^{r}$  represent the channel amplitude and phase, respectively, for the link between $r^{th}$ RIS and destination.
    Similar to \cite{basar2019wireless} and using (\ref{one}), instantaneous SNR at destination from $r^{th}$ RIS is given by
    \begin{equation}
        \gamma ^{r}=\frac{P_{s}\left | \sum_{i=1}^{N_r}\eta_{i}^{r}\beta_{i}^{r}e^{j\left ( \phi_{i}^{r}-\theta _{i}^{r} -\psi _{i}^{r}\right )} \right |^{2}}{V_{0}}.
    \end{equation}
    Similar to \cite{basar2019wireless}, full channel state information is assumed to be available. So $\gamma^r$ can be maximized by setting the $\phi_{i}^{r}=\theta _{i}^{r} +\psi _{i}^{r}$. Therefore $\gamma^r$ can be written as 
    \begin{equation}
        \gamma ^{r}=\frac{P_{s}\left ( \sum_{i=1}^{N_r}\eta_{i}^{r}\beta_{i}^{r} \right )^{2}}{V_{0}}=\overline{\gamma }A_{r}^{2},
    \end{equation}
   where $A^{r}=\sum_{i=1}^{N_r}\eta_{i}^{r}\beta_{i}^{r}$ and  average SNR $\overline{\gamma }=\frac{P_{s}}{V_{0}}$.\\
   At the destination, the RIS with the highest SNR is selected for communication. Assuming $\overline{\gamma}$ is the average channel SNR in the past window \cite{yang2020outage}, the selection principle at the destination similar to \cite{yang2020outage} is given by
    \begin{equation}
       \hat{r}=\arg \max_{r=1,\cdots, R} \gamma ^{r},
   \end{equation}
   where $\gamma ^{r}=A_{r}^{2}$. As the number of reflecting surfaces in RIS becomes large $N_{r}\gg 1$, using central limit theorem (CLT) it is shown \cite{basar2019wireless},\cite{yang2020coverage} that  $A_r$ follows Gaussian distribution with mean $\frac{N_r \pi}{4}$ and variance of $N_r\left ( 1-\frac{\pi ^{2}}{16} \right )$ . Hence, we can see that $A_r^{2}$ will be an NCCS RV with one d.o.f. \\
  Considering a source, destination, and $R$ RISs in between, $R$ links with SNRs $\{\gamma^r\}_{r=1}^R $ are available at the destination. Each $\gamma^r$ follows a NCCS distribution with one d.o.f with the parameters $\lambda_r =\left ( \frac{N_r\pi }{4} \right )^{2}$ and $\sigma_r ^{2}=N_r\left ( 1-\frac{\pi ^{2}}{16} \right )$. Here, $\lambda_r$ represents the non-centrality parameter, and $\sigma_{r}^2$ is the variance of NCCS distribution. Typically, the link with the highest SNR is selected for communication.
  \begin{equation}
      \gamma_{max}^R=\gamma_{\hat{r}}=\max_{r=1,\cdots, R} \gamma ^{r}.
  \end{equation}
    Now let us see how we find the distribution of maximum SNR ($\gamma_{max}^R$). Observe that parameters ($\lambda_r$, $\sigma_r $) of NCCS RVs depend on the number of reflecting elements $(N_r)$ of $r^{th}$ RIS. So, if we consider an equal number of reflecting elements on each RIS, then $\gamma_{max}^R$ would be the maximum of $R$ i.i.d NCCS RVs with one d.o.f. This was the case studied in \cite{yang2020outage}.  Different numbers of reflecting elements will result in $\gamma_{max}^R$ being the maximum of i.n.i.d. NCCS RVs. The exact distribution of $\gamma_{max}^R$ can be written as
    $F_{\gamma _{max}^{R}}\left ( \gamma  \right )=\prod_{r=1}^{R}F_{\gamma^{r}}\left ( \gamma  \right )$ and the exact distribution of $\gamma_{max}^R$ will involve fairly complicated expressions whose complexity increasing with increasing $R$. Instead, we will utilize the EVT to characterize the asymptotic distribution of maximum order statistics of i.n.i.d. NCCS RVs with one d.o.f in the next section.

    \section{ maximum order statistics of i.n.i.d. NCCS RVs } \label{sec_kth_max}
     The general procedure in finding the asymptotic distribution of maximum order statistics for i.i.d. RVs involves finding the maximum domain of attraction of the common distribution function. However, in the case of i.n.i.d. RVs, additional requirements have to be met in order for the maxima to be a non-degenerate distribution. Finding appropriate normalizing constants which satisfies the additional requirements is fairly challenging.
     In this section, we derive the maximum order statistics of a sequence of i.n.i.d. NCCS RVs with one d.o.f using EVT. Considering the normalizing constants $a_R$ and $b_R$, first, we will derive the asymptotic distribution of the normalized maximum SNR $(\widetilde{\gamma}_{max})$ where $\widetilde{\gamma}_{max}=\lim_{R\to \infty }\frac{\gamma_{max}^{R}-b_R}{a_R}$. Once we have $\widetilde{\gamma}_{max}$, characterization of $\gamma_{max}^{R}$ is simple. We will introduce some of EVT's key results from \cite{barakat2013limit} for the general i.n.i.d. case in order to facilitate the understanding of our proofs.\\
     Let $\left\{ \gamma_1,\gamma_2,..,\gamma_R \right\}$ be a sequence of independent random variables with  $\gamma_r\sim F_{r}\left( \gamma \right)$ for $r=1,2,..,R$. If $\gamma_{max}^{R}=\max\left\{ \gamma_{r} \right\}_{r=1}^{R}$, then CDF of $\gamma_{max}^{R}$ can be written as
     \begin{equation}
         G_{max}^{R}\left( \gamma \right)=P\left( \gamma_{max}^{R}\le \gamma \right)=\prod_{r=1}^{R}F_{r}\left( \gamma \right).
     \end{equation}
    The following uniformity assumptions (UAs) are required for the analysis of asymptotic order statistics. The sequences of CDFs ${F_r(\gamma)}$ and normalizing constants $a_R$ and $b_R$  are said to satisfy the UAs for maximum vector $\gamma_{max}^R$ if 
     \begin{equation}\label{uua1}
         \max_{1\le r\le R}\left\{ 1-F_r\left( a_R\gamma+b_R \right) \right\}\to 0 \quad as  \quad R\to \infty, 
     \end{equation}
     for all $ a_R\gamma+b_R >\alpha\left( F_r \right)$ and  $\alpha\left( F_r \right):=\inf \{ \gamma:F_r(\gamma)> 0 \} > -\infty $. Also, for a fixed number $0< t\leq 1$ and each sequence of integers $\left\{m_R \right\}_{R}$ such that $m_R< R$, when $R\to \infty$, $m_R\to \infty $ and $\frac{m_R}{R} \to t$, we should have that
     \begin{equation}\label{uua2}
         \tilde{u}\left ( t,\gamma \right )=\lim _{R\to \infty } \sum_{r=1}^{m_R}\left ( 1-F_r\left ( a_R\gamma+b_R \right ) \right ),
     \end{equation}
     exists and is finite for all $0< t\leq 1$, whenever it is finite for $t=1$. With the UAs in (\ref{uua1}) and (\ref{uua2}), the authors of \cite{barakat2013limit} presented the following lemma for characterizing the asymptotic distribution of the maximum random variable for the general i.n.i.d. case.
     \begin{lemma}\label{thm_max}
         Under the UA (\ref{uua1}) and (\ref{uua2}), a non-degenerate CDF $\tilde{G}_{max}\left ( \gamma \right )$ is the asymptotic distribution of $\frac{\gamma_{max}^R-b_R}{a_R}$ i.e., $G_{max}^R\left ( a_R\gamma+b_R \right ) =\prod _{r=1}^R F_r\left ( a_R\gamma+b_R \right )\overset{D}{\rightarrow}\tilde{G}_{max}\left ( \gamma \right )$ as $R \to \infty$ where $\overset{D}{\rightarrow}$ stands for convergence in distribution, if and only if
         \begin{equation}\label{the1}
             \tilde{u}\left ( \gamma \right )=\tilde{u}\left (1, \gamma \right )=\lim _{R\to \infty }\sum_{r=1}^{R}\left ( 1-F_r\left ( a_R\gamma+b_R \right ) \right )< \infty .
         \end{equation}
         Moreover $\tilde{G}_{max}\left ( \gamma \right )$ should have the form $\tilde{G}_{max}\left ( \gamma \right )=e^{-\tilde{u}\left ( \gamma \right )}$ and either (i) $\log \tilde{G}_{max}\left ( \gamma \right )$ is concave or (ii) $\omega _{max}=\omega \left ( \tilde{G}_{max}\left ( \gamma \right ) \right )$ is finite and $\log  \tilde{G}_{max}\left ( \omega _{max}-e^{-\gamma} \right )$ is concave or (iii) $\alpha _{max} =\alpha \left ( \tilde{G}_{max}\left ( \gamma\right ) \right )$ is finite and $\log  \tilde{G}_{max}\left ( \alpha _{max}-e^{\gamma} \right )$ is concave where $\gamma>0$ in (ii) and (iii).
     \end{lemma}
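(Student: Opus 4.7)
The plan is to convert the product form $G_{max}^R(a_R\gamma+b_R)=\prod_{r=1}^R F_r(a_R\gamma+b_R)$ into a sum by taking logarithms, and then exploit UA \eqref{uua1} to replace each $\log F_r$ by its linear approximation. Concretely, writing $\epsilon_{r,R}(\gamma)=1-F_r(a_R\gamma+b_R)$, the Taylor expansion $\log(1-x)=-x-\tfrac{x^2}{2}+O(x^3)$ gives
\begin{equation*}
\log G_{max}^R(a_R\gamma+b_R)=-\sum_{r=1}^R \epsilon_{r,R}(\gamma) + \rho_R(\gamma),
\end{equation*}
where $|\rho_R(\gamma)|\le C\,\max_{1\le r\le R}\epsilon_{r,R}(\gamma)\cdot\sum_{r=1}^R \epsilon_{r,R}(\gamma)$. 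Under UA \eqref{uua1} the first factor vanishes, so whenever the sum stays bounded the remainder $\rho_R(\gamma)\to 0$.

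For the sufficiency direction, I will assume $\tilde u(\gamma)<\infty$; then the remainder $\rho_R(\gamma)\to 0$ and passing to the limit yields $G_{max}^R(a_R\gamma+b_R)\to e^{-\tilde u(\gamma)}$, which identifies $\tilde G_{max}(\gamma)=e^{-\tilde u(\gamma)}$. For necessity, suppose $\frac{\gamma_{max}^R-b_R}{a_R}\overset{D}{\rightarrow}\tilde G_{max}$ with $\tilde G_{max}$ non-degenerate. At any continuity point $\gamma$ with $\tilde G_{max}(\gamma)>0$, the limit $\log\tilde G_{max}(\gamma)$ is finite, so $\sum_{r=1}^R\epsilon_{r,R}(\gamma)$ remains bounded, the remainder again vanishes, and we recover $\tilde u(\gamma)=-\log\tilde G_{max}(\gamma)<\infty$.

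It then remains to establish the concavity alternatives (i)--(iii). The starting observation is that UA \eqref{uua2} forces $\tilde u(t,\gamma)$ to exist jointly in $(t,\gamma)$ and to be nondecreasing and subadditive in $t$, since the partial sums $\sum_{r=1}^{m_R}\epsilon_{r,R}(\gamma)$ are monotone in $m_R$. Combining this with a stability identity obtained by comparing $G_{max}^R$ to the maximum over a subsequence of length $m_R$ after rescaling, one argues that $\tilde u$ must respond to natural affine transformations of $\gamma$ like the Gnedenko stability exponent of one of the three classical extreme value types. Which of (i)--(iii) applies is then dictated purely by the geometry of the support of $\tilde G_{max}$: no finite endpoint gives direct concavity of $\log\tilde G_{max}(\gamma)$; a finite upper endpoint $\omega_{max}$ demands the substitution $\gamma\mapsto \omega_{max}-e^{-\gamma}$; and a finite lower endpoint $\alpha_{max}$ demands $\gamma\mapsto\alpha_{max}-e^{\gamma}$.

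The most delicate step, which I expect to be the main obstacle, is precisely this derivation of the concavity properties from UA \eqref{uua2}. In the i.i.d.\ case the scaling $\tilde u(t,\gamma)=t\,\tilde u(\gamma)$ makes the argument immediate and directly recovers the Fisher--Tippett--Gnedenko trichotomy. In the i.n.i.d.\ setting the dependence on $t$ is no longer linear, and one must argue via the monotonicity and subadditivity of the partial sums together with a careful change of normalizing constants to transfer the $t$-structure of $\tilde u(t,\gamma)$ into the required concavity in $\gamma$ after the appropriate endpoint-dependent substitution. Because the NCCS application in later sections only needs the \emph{form} $\tilde G_{max}=e^{-\tilde u}$ and a verification of UA \eqref{uua1}--\eqref{uua2}, I would carry out the sum-approximation part in full detail and import the concavity trichotomy from \cite{barakat2013limit}.
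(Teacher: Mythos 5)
Your proposal cannot really be compared step-by-step with the paper's proof, because the paper offers none: Lemma \ref{thm_max} is imported wholesale, with the proof deferred to \cite{barakat2013limit}. What you do is reconstruct the central equivalence yourself and defer only the remaining part. Your product-to-sum argument is the standard and correct one: writing $\epsilon_{r,R}(\gamma)=1-F_r(a_R\gamma+b_R)$, the expansion $\log(1-x)=-x+O(x^2)$ together with UA (\ref{uua1}) (uniform negligibility, $\max_r\epsilon_{r,R}\to 0$) makes the remainder $\rho_R$ vanish whenever the sum stays bounded, which gives sufficiency directly and, in the necessity direction, lets you bootstrap boundedness of $\sum_r\epsilon_{r,R}(\gamma)$ from finiteness of $\log\tilde G_{max}(\gamma)$ at continuity points where $\tilde G_{max}(\gamma)>0$; this correctly yields $\tilde G_{max}=e^{-\tilde u}$ and the ``if and only if.'' The genuinely hard content of the lemma, however, is the Mejzler-type characterization of which limit laws can occur, i.e.\ the concavity trichotomy (i)--(iii), and there your sketch (monotonicity and ``subadditivity'' of $\tilde u(t,\gamma)$ in $t$ plus a stability identity) is heuristic rather than a proof --- in particular the claimed subadditivity in $t$ is not justified, and the transfer of the $t$-structure into concavity in $\gamma$ after the endpoint substitutions is exactly the delicate step. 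Since you explicitly state that you would import that trichotomy from \cite{barakat2013limit}, your treatment is at least as complete as the paper's (which imports everything), and the part you do prove is sound; just be aware that the portion you carry out in detail is the routine half of the result, while the cited reference is still doing the heavy lifting for the classification of $\tilde G_{max}$.
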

     \begin{proof}
         Please refer to \cite{barakat2013limit} for the proof.
     \end{proof}
     We will make use of Lemma \ref{thm_max} in deriving the asymptotic distribution of the random variable $\widetilde{\gamma}_{max}$  by arriving at normalizing constants $a_R$ and $b_R$, satisfying the UAs and (\ref{the1}) for i.n.i.d. NCCS RVs. Once we arrive at $\widetilde{\gamma}_{max}$, using this, we can obtain the distribution of $\gamma_{max}^{R}$.

     Now, we will consider the system model presented in Section II to derive the asymptotic distribution of $\gamma_{max}^{R}$. Note that considering different numbers of reflecting elements in each RIS, we must deal with i.n.i.d. RVs $\{\gamma^r\}_{r=1}^R $. And, if we assume the same number of reflecting elements for all the RISs, we will get i.i.d. RVs $\{\gamma^r=\gamma\}_{r=1}^R $. 
     \subsection{i.n.i.d. case}
     Now, let $\{\gamma^r\}_{r=1}^R $ be a sequence of NCCS random variables with one d.o.f, then its CDF is
     \begin{equation}\label{nccs_cdf}
         F_{\gamma^r}(\gamma)=1-Q_{\frac{1}{2}}\left(\frac{\sqrt{\lambda_r}}{\sigma_r}, \frac{\sqrt{\gamma}}{\sigma_r}\right).
     \end{equation}
     In (\ref{nccs_cdf}), $Q_{\frac{1}{2}}(,.,)$ is the Marcum-Q function \cite{molisch2012wireless}  and $\lambda_r$ is the non-centrality parameter of the non-central chi-square RV.
     As mentioned in section II, $\lambda_r =\left ( \frac{N_r\pi }{4} \right )^{2}$ and $\sigma_r ^{2}=N_r\left ( 1-\frac{\pi ^{2}}{16} \right )$ depends only on the number of reflecting elements and both $\lambda_r$ and $\sigma_r$ will take maximum value at same index corresponding to RIS with maximum number of reflecting elements. Let $R$ be the total number of RVs. We will assume $\left(\lambda_r, \sigma_r\right)$ takes a finite set of values i.e. $\lambda_r \in\left\{\lambda_1, \lambda_2 \ldots \lambda_P\right\}$ for all $r \in\{1, \ldots R\}$ and
    $\sigma_r \in\left\{\sigma_1, \sigma_2 \ldots \sigma_P\right\}$  for all  $r \in\{1, \cdots R\}$. Define
    \begin{equation*}
        R_{i}=\sum_{r=1}^R \mathbb{I}_{\lambda_r=\lambda_i, \sigma_r=\sigma_i} \quad  1 \leq i \leq P , 
    \end{equation*}
    
    Where $\mathbb{I}_{\lambda _{r},\sigma _{r}}:=\left\{\begin{matrix}
    1 & if \lambda _{r}=\lambda _{i} , \sigma _{r}=\sigma _{i} \\ 
    0 & if \lambda _{r}\neq \lambda _{i} , \sigma _{r}\neq \sigma _{i}
    \end{matrix}\right.$. Here $R_{i}$ represents the number of times  pair $\left(\lambda_i, \sigma_i\right)$ occurs among  $R$ values. In the case of the multi-RIS system model presented in section II, the SNR follows the CDF in (\ref{nccs_cdf}). 

     \begin{theorem}\label{main_thm}
     The asymptotic CDF of normalized maximum $(\widetilde{\gamma}_{max})$ of a sequence of i.n.i.d. non-central chi-square random variables with one d.o.f as $R\to \infty$ is given by 
     \begin{equation}
         F_{\tilde{\gamma}_{max}}(\gamma)=\exp \left ( - \exp \left ( - \gamma  \right )  \right ),         
     \end{equation}
     for normalizing constants $a_R=\frac{ \tilde{\sigma}^2}{\epsilon}$ and $b_R=\frac{\tilde{\sigma}^2}{\epsilon}  \left[\log(\tilde{R})-c_1 \right]$.
     Here, $\left(\lambda_r, \sigma_r\right)$ takes a finite set of values i.e. $\lambda_r \in\left\{\lambda_1, \lambda_2 \ldots \lambda_P\right\}$ and  $\sigma_r \in\left\{\sigma_1, \sigma_2 \ldots \sigma_P\right\}$  for all  $r \in\{1, \cdots R\}$.
    Further, $R_{i}$ represents the number of times  pair $\left(\lambda_i, \sigma_i\right)$ occurs among  $R$ values. Let $\tilde{\sigma}$ be the largest among $\left\{\sigma_{1},\cdots,\sigma_{P}\right\}$, $\tilde{\lambda}$ be the largest among $\left\{\lambda_{1},\cdots,\lambda_{P}\right\}$ and $\tilde{R}$ to be the largest among $\left\{R_{1},\cdots,R_{P}\right\}$. Also, $\epsilon$ is the Chernoff parameter $(0< \epsilon < \frac{1}{2})$ and 
    $$c_1=\frac{-1}{\tilde{\theta}}\left[ \log\left( 1-2\epsilon  \right)^{-\frac{1}{2}}+\frac{\epsilon}{2\left( 1-2\epsilon \right)}\frac{\tilde{\lambda}}{\tilde{\sigma}^{2}}  \right].$$        
    \end{theorem}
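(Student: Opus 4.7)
The plan is to invoke Lemma \ref{thm_max} directly: if we can exhibit normalizing constants $a_R,b_R$ for which the sum $\sum_{r=1}^{R}\bigl(1-F_{\gamma^r}(a_R\gamma+b_R)\bigr)$ converges to $e^{-\gamma}$, and the uniformity assumptions (\ref{uua1})--(\ref{uua2}) hold, then the limiting CDF is forced to be $\exp(-e^{-\gamma})$, exactly the Gumbel form claimed. So the whole proof reduces to (a) finding the right $a_R,b_R$ and (b) carrying out the tail calculation for each of the finitely many classes $(\lambda_i,\sigma_i)$.

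The tail $1-F_{\gamma^r}(\cdot)$ is a Marcum-$Q_{1/2}$ expression and is not conveniently integrable, so I would avoid working with it directly and instead use a Chernoff bound on the NCCS MGF. Writing $M_r(s)=(1-2s\sigma_r^2)^{-1/2}\exp\!\bigl(s\lambda_r/(1-2s\sigma_r^2)\bigr)$ and picking the Chernoff parameter $s=\epsilon/\tilde{\sigma}^2$ with $0<\epsilon<1/2$ (admissible since $\sigma_r\le\tilde{\sigma}$ and $\lambda_r\le\tilde\lambda$ for every $r$), one obtains a bound of the form
\begin{equation*}
1-F_{\gamma^r}(x)\;\le\;\exp\!\Bigl(-\tfrac{\epsilon}{\tilde{\sigma}^2}x\Bigr)\,
(1-2\epsilon)^{-1/2}\,\exp\!\Bigl(\tfrac{\epsilon}{2(1-2\epsilon)}\tfrac{\lambda_r}{\tilde\sigma^2}\Bigr).
\end{equation*}
This is exactly the shape that motivates the choice $a_R=\tilde{\sigma}^2/\epsilon$ and $b_R=(\tilde\sigma^2/\epsilon)[\log\tilde R-c_1]$: plugging $x=a_R\gamma+b_R$ into the bound produces an $e^{-\gamma}$ factor, a $1/\tilde R$ factor (which will cancel against the multiplicity $R_i\le\tilde R$ of the dominant class), and a constant correction collected into $c_1$ so that the dominant class contributes exactly $e^{-\gamma}$.

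With these constants fixed, the main computation is to break the sum in (\ref{the1}) class by class: $\sum_{r=1}^{R}(1-F_{\gamma^r}(a_R\gamma+b_R))=\sum_{i=1}^{P}R_i\bigl(1-F_{(i)}(a_R\gamma+b_R)\bigr)$. For the class $i^\star$ achieving $\tilde R$ (which I would also tie to the class with $\tilde\sigma,\tilde\lambda$; the author's notation identifies these indices) the Chernoff calibration gives $R_{i^\star}\bigl(1-F_{(i^\star)}(a_R\gamma+b_R)\bigr)\to e^{-\gamma}$. For every subdominant class, the factors $R_i/\tilde R\le 1$ together with the strictly smaller exponent $\epsilon(\sigma_i^2/\tilde\sigma^2-1)\log\tilde R\to -\infty$ make the contribution vanish as $R\to\infty$. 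Summing gives $\tilde u(\gamma)=e^{-\gamma}$, and the non-degenerate limit of $\prod_r F_{\gamma^r}(a_R\gamma+b_R)$ is $\exp(-e^{-\gamma})$. The concavity side-condition in Lemma \ref{thm_max}(i) is immediate: $\log\tilde G_{\max}(\gamma)=-e^{-\gamma}$ is concave.

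What remains is checking the uniformity assumptions (\ref{uua1})--(\ref{uua2}). For (\ref{uua1}), the same Chernoff bound shows $\max_r(1-F_{\gamma^r}(a_R\gamma+b_R))$ is of order $e^{-\gamma}/\tilde R\to 0$. For (\ref{uua2}), I would note that for any subsequence with $m_R/R\to t$ the same class-by-class decomposition applies, and the limit exists whenever the $t=1$ limit exists, because each class contributes proportionally; this is the main place where the assumption that $(\lambda_r,\sigma_r)$ takes only finitely many values is really used. I expect the principal obstacle to be this bookkeeping step, namely controlling the subdominant classes and the subsequence limit carefully enough to rule out mass escape to $\pm\infty$; the NCCS tail calculation itself is routine once the Chernoff calibration is done, and the final Gumbel form then follows automatically from Lemma \ref{thm_max}.
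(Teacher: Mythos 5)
Your overall route is the paper's own: invoke Lemma \ref{thm_max}, replace the Marcum-$Q_{1/2}$ tail by a Chernoff-type exponential expression (the ``asymptotic form'' (\ref{asy}) the paper uses is exactly such a bound), take $a_R=\tilde\sigma^2/\epsilon$ and $b_R=\frac{\tilde\sigma^2}{\epsilon}[\log\tilde R-c_1]$, decompose the sum in (\ref{the1}) over the $P$ classes, let the dominant class produce $e^{-\gamma}$, argue the remaining classes vanish, and check concavity of $\log\tilde G_{max}(\gamma)=-e^{-\gamma}$. The one substantive problem is your treatment of the subdominant classes. Because you fix a single Chernoff parameter $s=\epsilon/\tilde\sigma^2$ for every class, your bound decays at the same rate $\exp(-\epsilon x/\tilde\sigma^2)$ for all $i$; plugging in $x=a_R\gamma+b_R$ then leaves each subdominant class a contribution of order $(R_i/\tilde R)\,e^{-\gamma}$ times a fixed constant, and this does \emph{not} vanish when $R_i$ is comparable to $\tilde R$ --- for instance the paper's own simulation setting $R_1=R_2=R_3=R/3$, where $R_i=\tilde R$ for every class. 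The exponent $\epsilon(\sigma_i^2/\tilde\sigma^2-1)\log\tilde R$ you invoke to kill these terms is not produced by the bound you wrote down: the extra decay appears only if each class's tail is evaluated with its \emph{own} variance, i.e.\ with rate $\epsilon/\sigma_i^2$, which is what the paper does. That classwise evaluation is what generates the factors $R_i/\tilde R^{\theta_i}$ with $\theta_i=\tilde\sigma^2/\sigma_i^2>1$ in (\ref{imp}), and those do go to zero (with the correct relative exponent $-(\theta_i-1)\log\tilde R$, not the one you quote). So the fix is to apply the Chernoff/Marcum-$Q$ approximation per class rather than with the single worst-case $\tilde\sigma^2$.

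Two further remarks, neither of which distinguishes you from the paper: (i) a one-sided Chernoff \emph{bound} by itself cannot give convergence of the dominant term \emph{to} $e^{-\gamma}$, only an upper bound --- the paper treats (\ref{asy}) as an asymptotic equality, and you would have to do the same or supply a matching lower bound; (ii) like the paper, you implicitly assume the class attaining $(\tilde\lambda,\tilde\sigma)$ is also the one attaining $\tilde R$ (you flag this, which is good), and your direct subsequence check of (\ref{uua2}) can be replaced by the simpler sufficient conditions (\ref{con1})--(\ref{con2}) on $a_R,b_R$ that the paper uses.
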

    \begin{proof}
         Lemma \ref{thm_max} states that if 
         \begin{equation}\label{s1}
           \tilde{u}\left ( \gamma \right )=\lim _{R\to \infty }\sum_{r=1}^{R}\left ( 1-F_r\left ( a_R\gamma+b_R \right ) \right )< \infty,           
         \end{equation}
         for some normalizing constants $a_R$ and $b_R$ satisfying UAs, we can derive the distribution of $\widetilde{\gamma}_{max}$. Further, Mezlers \cite[Chapter 5]{haan2006extreme} give the following conditions on $a_R$ and $b_R$ such that  UA (\ref{uua1}) and (\ref{uua2}) are satisfied:
         \begin{align}\label{con1}
    		\mid \log a_R \mid + \mid b_R \mid \ \to \infty \ \text{as} \ R \to \infty,
    	\end{align}
		and
		\begin{equation}\label{con2}
    		\begin{array}{l}
    		\frac{a_{R+1}}{a_{R}} \rightarrow 1, \\
    		\frac{\left(b_{R+1}-b_{R}\right)}{a_{R}} \rightarrow 0.
    		\end{array}
    	\end{equation}
    We will derive the distribution of $\tilde{\gamma}_{max}$ by finding a $a_R$ and a $b_R$ such that (\ref{s1}),  (\ref{con1}) and (\ref{con2}) are satisfied in order for the UAs to hold. From \cite[(18)]{yang2020outage}, the asymptotic form of the generalized Marcum Q-function can be expressed as 
   \begin{equation}\label{asy}
      Q_{n}\left ( x ,y \right )\simeq  \left ( 1-2\epsilon  \right )^{-n} \exp \left ( -\epsilon y ^{2} \right )\exp \left ( \frac{n\epsilon x ^{2}}{1-2\epsilon} \right ).
   \end{equation}
   Here $y^{2}> n\left ( x^{2}+2 \right )$  and $\epsilon$ is the Chernoff parameter $(0< \epsilon < \frac{1}{2})$ with optimum value $\epsilon_{0}=\frac{1}{2}\left ( 1-\frac{n}{y ^{2}}-\frac{n}{y ^{2}}\sqrt{1+\frac{x ^{2}y ^{2}}{n}} \right )$ 
    Using (\ref{asy}) and (\ref{nccs_cdf}), we can rewrite $ \tilde{u}\left ( \gamma \right )$ in (\ref{s1}) as
    \begin{align}
      \tilde{u}\left( \gamma \right)= & \lim_{R\to \infty }\sum_{i=1}^{R} R_{i} \left( 1-2\epsilon  \right)^{-\frac{1}{2}}\exp\left( -\frac{\epsilon }{\sigma_{i}^{2}} \left( a_{R}\gamma+b_{R} \right)\right)  \exp\left( \frac{\epsilon}{2\left( 1-2\epsilon \right)}\frac{\lambda_{i}}{\sigma_{i}^{2}} \right)
    \end{align}
    \begin{align}\label{eq8}
        \tilde{u}\left( \gamma \right)= &\lim_{R\to \infty }\sum_{i=1}^{R} R_{i} \left( 1-2\epsilon  \right)^{-\frac{1}{2}}\exp\left( \frac{\epsilon}{2\left( 1-2\epsilon \right)}\frac{\lambda_{i}}{\sigma_{i}^{2}} \right) \exp\left( -\frac{\epsilon }{\sigma_{i}^{2}} \left( a_{R}\gamma \right)\right)\exp\left( -\frac{\epsilon }{\sigma_{i}^{2}} \left(b_{R} \right)\right)
    \end{align}
     Choose $\tilde{\sigma}$ to be the largest among $\left\{\sigma_{1},\cdots,\sigma_{P}\right\}$, $\tilde{\lambda}$ to be the largest among $\left\{\lambda_{1},\cdots,\lambda_{P}\right\}$ and $\tilde{R}$ to be the largest among $\left\{R_{1},\cdots,R_{P}\right\}$. Let us assume the following $a_R$ and $b_R$ values satisfying the conditions in (\ref{con1}) and (\ref{con2}) 
     \begin{align}\label{ar}
    		a_R=\frac{ \tilde{\sigma}^2}{\epsilon},
    \end{align}
    and
    \begin{equation}\label{br}
    	b_R=\frac{ \tilde{\sigma}^2}{\epsilon} \left[\log(\tilde{R})-c_1 \right],
    \end{equation}
    where $c_1$ is a constant and we assume $\tilde{R} \to \infty$ as $R \to \infty$. Note $a_R$ and $b_R$ satisfy (\ref{con1}) and (\ref{con2}). 
   For the above choice of normalizing constant, UA (\ref{uua1}) is satisfied as $b_R\to \infty$ as $R \to \infty$. For UA (\ref{uua2}) to be satisfied, we require $  \tilde{u}\left( \gamma \right)< \infty$ and $a_R$ and $b_R$ should satisfy (\ref{con1}) and (\ref{con2}). Substituting $a_R$ and $b_R$ in (\ref{eq8}),
    \begin{align}
        \tilde{u}\left( \gamma \right)= &\lim_{R\to \infty }\sum_{i=1}^{R} R_{i} \left( 1-2\epsilon  \right)^{-\frac{1}{2}}\exp\left( \frac{\epsilon}{2\left( 1-2\epsilon \right)}\frac{\lambda_{i}}{\sigma_{i}^{2}} \right) \nonumber \\ &
        \exp\left( -\frac{\epsilon }{\sigma_{i}^{2}} \left( \frac{\tilde{\sigma}^2}{\epsilon}\gamma \right)\right)
        \exp\left( -\frac{\epsilon }{\sigma_{i}^{2}} \left(\frac{\tilde{\sigma}^2}{\epsilon} \left[\log(\tilde{R})-c_1) \right] \right)\right).
    \end{align}
     Let $\left(\frac{\tilde{\sigma}}{\sigma_i}\right)^2=\theta_{i}$ , so  $\theta_i \in\left\{\theta_1, \theta_2 \ldots \theta_P\right\}$.      
     \begin{align}\label{comp1}
         \tilde{u}\left( \gamma \right)= &\lim_{R\to \infty }\sum_{i=1}^{R} R_{i} \left( 1-2\epsilon  \right)^{-\frac{1}{2}}\exp\left( \frac{\epsilon}{2\left( 1-2\epsilon \right)}\frac{\lambda_{i}}{\sigma_{i}^{2}} \right) \exp\left( - \theta_{i}  \gamma \right)\exp\left( - \theta_{i} \log(\tilde{R})+ \theta_{i}c_1 \right).
     \end{align}
     After rearranging (\ref{comp1}), we have,      
    \begin{align}\label{eq11}
        \tilde{u}\left( \gamma \right)=& \lim_{R\to \infty }\sum_{i=1}^{R} \exp\left( - \theta_{i}  \gamma \right)\frac{R_{i}}{\tilde{R}^{\theta_{i}}}  \underbrace{ \left( 1-2\epsilon  \right)^{-\frac{1}{2}}\exp\left( \frac{\epsilon}{2\left( 1-2\epsilon \right)}\frac{\lambda_{i}}{\sigma_{i}^{2}} \right)\exp\left(  \theta_{i}c_1 \right)}_{\text{Term-1}}
    \end{align}
    \begin{align}\label{eq21}
        \tilde{u}\left( \gamma \right)= &\lim_{R\to \infty }\sum_{i=1}^{R} \exp\left( - \theta_{i}  \gamma \right)\frac{R_{i}}{\tilde{R}^{\theta_{i}}} \underbrace{ \exp\left ( \log  (\left( 1-2\epsilon  \right)^{-\frac{1}{2}} + \frac{\epsilon}{2\left( 1-2\epsilon \right)}\frac{\lambda_{i}}{\sigma_{i}^{2}}+\theta_{i}c_1 \right )}_{\text{Term-1}}
    \end{align}
    Let us find the constant $c_1$ from Term-1, i.e. obtain $c_1$ such that
    \begin{equation*}
         \theta_{i}c_1=-\log\left( 1-2\epsilon  \right)^{-\frac{1}{2}}-\frac{\epsilon}{2\left( 1-2\epsilon \right)}\frac{\lambda_{i}}{\sigma_{i}^{2}}. 
    \end{equation*}
   Choose $c_1=\frac{-1}{\tilde{\theta}}\left[ \log\left( 1-2\epsilon  \right)^{-\frac{1}{2}}+\frac{\epsilon}{2\left( 1-2\epsilon \right)}\frac{\tilde{\lambda}}{\tilde{\sigma}^{2}}  \right]$, where $\tilde{\theta}=\min _{i=1,2,..,P} \theta _{i}$ so that Term-1 in (\ref{eq21}) will become one for $\tilde{\sigma}=\sigma_i$. Note that $\theta_i$ takes values greater than or equal to one, and when $\tilde{\sigma}=\sigma_i$ then only $\theta_i=1$. Substituting $c_1$ in (\ref{eq21})
    \begin{align}
        \tilde{u}\left( \gamma \right)=& \lim_{R\to \infty }\sum_{i=1}^{R} \exp\left( - \theta_{i}  \gamma \right)\frac{R_{i}}{\tilde{R}^{\theta_{i}}}  \exp\left( \log\left( 1-2\epsilon  \right)^{-\frac{1}{2}}+\frac{\epsilon}{2\left( 1-2\epsilon \right)}\frac{\lambda_{i}}{\sigma_{i}^{2}}\right)  \nonumber \\ &\exp \left(-\frac{\theta_{i}}{\tilde{\theta}}\left[ \log\left( 1-2\epsilon  \right)^{-\frac{1}{2}}+\frac{\epsilon}{2\left( 1-2\epsilon \right)}\frac{\tilde{\lambda}}{\tilde{\sigma}^{2}} \right] \right).
   \end{align}
   Therefore, 
   \begin{equation}\label{imp}
        \tilde{u}\left ( \gamma \right )=\sum_{i=1}^{P}\left ( \exp \left ( - \theta_{i}\gamma \right )p_i\right ),
   \end{equation}
   where $p_i=\frac{R_{i}}{\tilde{R}^{\theta_{i}}}\exp\left( \log\left( 1-2\epsilon  \right)^{-\frac{1}{2}}+\frac{\epsilon}{2\left( 1-2\epsilon \right)}\frac{\lambda_{i}}{\sigma_{i}^{2}}\right)\exp\left(-\frac{\theta_{i}}{\tilde{\theta}}\left[ \log\left( 1-2\epsilon  \right)^{-\frac{1}{2}}+\frac{\epsilon}{2\left( 1-2\epsilon \right)}\frac{\tilde{\lambda}}{\tilde{\sigma}^{2}} \right] \right)$.
   Note that $\tilde{R}\rightarrow \infty $ when $R\rightarrow \infty $ and there are $P$ values for $p_i$ i.e., $i=1,2,.,P$. Only for one $i$ $( \lambda_i=\tilde{\lambda}, \sigma_i=\tilde{\sigma})$, $\theta_i=1$ and in that scenario $p_i=1$ because $R_i=\tilde{R}$ and all the terms in the exponent go to zero.  Also, for this case $ \tilde{u}\left( \gamma \right)$ is finite as $\lim_{R\to \infty }\frac{R_{i}}{\tilde{R}^{\theta_{i}}}=1$. For all remaining i's $(\lambda_i \neq\tilde{\lambda},\sigma_i\neq\tilde{\sigma})$ as $R_i< \tilde{R}$ and $\theta_i>1$ corresponding $p_i=0$ as  $\lim_{R\to \infty }\frac{R_{i}}{\tilde{R}^{\theta_{i}}}=0$. Hence, the summation in (\ref{imp}) is finite making $\tilde{u}\left ( \gamma \right )=\exp(-\gamma) < \infty$ for the choice of normalizing constants $ a_R=\frac{ \tilde{\sigma}^2}{\epsilon}$ and $ b_R=\frac{ \tilde{\sigma}^2}{\epsilon} \left[\log(\tilde{R})-c_1 \right]$. Furthermore, from  Lemma \ref{thm_max} $\tilde{G}_{max}\left ( \gamma \right )=e^{-\tilde{u}\left ( \gamma \right )}=\exp(-\exp(-\gamma))$ and note that $\log \tilde{G}_{max}\left ( \gamma \right )=-\exp(-\gamma)$ is concave.
   Hence, the asymptotic CDF of the normalized maximum of a non-central chi-square RVs with one d.o.f is given by
   \begin{equation}\label{inid_cdf}
         F_{\tilde{\gamma}_{max}}(\gamma)=\exp \left ( - \exp \left ( - \gamma  \right )  \right ).        
     \end{equation}
   Note that for all practical purposes (finite values of $R$ and $\tilde{R}$), one can still use  $ \tilde{u}\left ( \gamma \right )=\sum_{i=1}^{P}\left ( \exp \left ( - \theta_{i}\gamma \right )p_i\right )$ and   
   \begin{equation}\label{cdf}
        F_{\tilde{\gamma}_{max}}(\gamma)=\exp \left ( -\sum_{i=1}^{P}\left ( \exp \left ( - \theta_{i}\gamma \right ) p_i\right ) \right ),
   \end{equation}
    where $p_i=\frac{R_{i}}{\tilde{R}^{\theta_{i}}}\exp\left( \log\left( 1-2\epsilon  \right)^{-\frac{1}{2}}+\frac{\epsilon}{2\left( 1-2\epsilon \right)}\frac{\lambda_{i}}{\sigma_{i}^{2}}\right)\exp\left(-\frac{\theta_{i}}{\tilde{\theta}}\left[ \log\left( 1-2\epsilon  \right)^{-\frac{1}{2}}+\frac{\epsilon}{2\left( 1-2\epsilon \right)}\frac{\tilde{\lambda}}{\tilde{\sigma}^{2}} \right] \right)$.
  \end{proof}  
  \subsection{i.i.d. case}
  If all the RISs have the same number of reflecting elements ($N_r=N$), then each $\gamma^r$ follows an NCCS distribution with one d.o.f with the parameters $\lambda_r =\left ( \frac{N\pi }{4} \right )^{2}$ and $\sigma_r ^{2}=N\left ( 1-\frac{\pi ^{2}}{16} \right )$. So $\gamma _{max}=\max _{r=1,2,..,P}\left \{ \gamma ^{r}=\gamma  \right \}$ and distribution of normalized maximum $F_{\tilde{\gamma}_{max}}(\gamma)$ can be derived as a special case of Theorem \ref{main_thm} and is presented in the following corollary.
  \begin{corollary}\label{cor}
       The asymptotic CDF of $\widetilde{\gamma}_{max}$ of a sequence of i.i.d. non-central chi-square random variables with one d.o.f as $R\to \infty$ is given by 
     \begin{equation}\label{iid_cdf}
        F_{\tilde{\gamma}_{max}}(\gamma)=\exp\left [ - \exp\left ( - \gamma \right ) \right ],        
     \end{equation}
     for normalizing constants $a_R=\frac{\sigma^2}{\epsilon}$ and $b_R=\frac{\sigma^2}{\epsilon}  \left[\log(R)-c_1 \right]$.    
     Also, $\epsilon$ is the Chernoff parameter $(0< \epsilon < \frac{1}{2})$ and $c_1=-\left[ \log\left( 1-2\epsilon  \right)^{-\frac{1}{2}}+\frac{\epsilon}{2\left( 1-2\epsilon \right)}\frac{\lambda}{\sigma^{2}}  \right]$ are constants.        
  \end{corollary}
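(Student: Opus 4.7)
The plan is to derive the corollary as a direct specialization of Theorem~\ref{main_thm} to the case where every RIS has the same number of reflecting elements. When $N_r = N$ for all $r$, the pair $(\lambda_r,\sigma_r)$ takes only one distinct value, so in the notation of Theorem~\ref{main_thm} we have $P=1$, $\lambda_1 = \lambda$, $\sigma_1 = \sigma$, and $R_1 = R$. Consequently $\tilde{\lambda} = \lambda$, $\tilde{\sigma} = \sigma$, and $\tilde{R} = R$, which already matches the normalizing constants claimed in the corollary once one substitutes into the general formulas $a_R = \tilde{\sigma}^2/\epsilon$ and $b_R = (\tilde{\sigma}^2/\epsilon)[\log(\tilde{R}) - c_1]$.

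Next I would compute $\theta_i$ and $c_1$ in this degenerate setting. Since $\theta_i = (\tilde{\sigma}/\sigma_i)^2$ and there is only one index $i$ with $\sigma_i = \tilde{\sigma}$, we get $\theta_1 = 1$ and therefore $\tilde{\theta} = \min_i \theta_i = 1$. Plugging $\tilde{\theta}=1$, $\tilde{\lambda}=\lambda$, and $\tilde{\sigma}=\sigma$ into the expression for $c_1$ from Theorem~\ref{main_thm} yields
\begin{equation*}
c_1 = -\left[\log(1-2\epsilon)^{-1/2} + \frac{\epsilon}{2(1-2\epsilon)}\frac{\lambda}{\sigma^2}\right],
\end{equation*}
which is exactly the constant quoted in the statement of the corollary.

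Finally, I would invoke the expression $\tilde{u}(\gamma) = \sum_{i=1}^P \exp(-\theta_i \gamma)\,p_i$ from the proof of Theorem~\ref{main_thm}. With only a single index, $R_1/\tilde{R}^{\theta_1} = R/R = 1$, and by construction of $c_1$ the remaining exponential factors in $p_1$ cancel, giving $p_1 = 1$. Hence $\tilde{u}(\gamma) = \exp(-\gamma)$, which is finite for every $\gamma$, so the hypothesis of Lemma~\ref{thm_max} is satisfied; the uniformity assumptions (\ref{uua1}) and (\ref{uua2}) are immediate because $b_R \to \infty$ and the sequences $a_R, b_R$ inherit (\ref{con1})--(\ref{con2}) from the i.n.i.d.\ case. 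Concavity of $\log \tilde{G}_{max}(\gamma) = -\exp(-\gamma)$ is clear, so Lemma~\ref{thm_max} delivers
\begin{equation*}
F_{\tilde{\gamma}_{max}}(\gamma) = \exp(-\tilde{u}(\gamma)) = \exp(-\exp(-\gamma)),
\end{equation*}
which is the claimed Gumbel limit.

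Because the derivation is purely a substitution into an already-proved theorem, I do not anticipate a genuine obstacle; the only thing that needs a moment of care is checking that with $P=1$ every term in the series for $\tilde{u}(\gamma)$ collapses correctly and that no assumption tacit to the i.n.i.d.\ proof (such as $\tilde{R}/\tilde{R}^{\theta_i} \to 0$ for $\theta_i > 1$) is being used vacuously in a way that affects the conclusion here.
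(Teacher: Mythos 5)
Your proposal is correct and follows exactly the route the paper takes: the paper's proof of Corollary \ref{cor} is precisely the substitution $\lambda_r=\lambda$, $\sigma_r=\sigma$ (hence $P=1$, $\tilde{R}=R$, $\tilde{\theta}=1$) into Theorem \ref{main_thm}. Your write-up simply spells out the collapse of $\theta_i$, $c_1$, and $p_1$ that the paper leaves implicit, and does so correctly.
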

  \begin{proof}
      This result can be derived by substituting $\lambda_r=\lambda$ and $\sigma_r=\sigma$ for all $r=\{1,2,..,P\}$ in Theorem \ref{main_thm}.
  \end{proof}
  For all practical cases one can obtain the unnormalized statistics by substituting $\gamma$ by $\frac{\gamma-b_R}{a_R}$ in (\ref{inid_cdf}) . Therefore $F_{\gamma_{max}}(\gamma)$ can be written as
  \begin{equation}
      F_{\gamma _{max}}\left ( \gamma  \right )=F_{\tilde{\gamma }_{max}}\left ( \gamma  \right )|_{\gamma =\frac{\gamma -b_R}{a_R}}.
  \end{equation}
\subsection{$k$-th order statistics}
So far, we have analyzed the multi-RIS system in a scenario where the link with the highest SNR is selected. If we are interested in choosing the link with $k$-th highest SNR instead of maximum SNR, we require the $k$-th order statistics, i.e., we want to find the normalized $k$-th maximum distribution of NCCS random variables with one d.o.f.  Let $\gamma_{ (1: R)} \leq \gamma_{(2: R)} \leq \cdots \leq\gamma_{(R:R)}$, be the order statistics where the $k$-th order statistic is given by $\gamma_{(R-k+1:R)}$. Finding the exact CDF of $k$-th order statistic $\gamma_{(R-k+1:R)}$  involves very complicated expression as given  \cite[(5.2.1)]{david2003order} 
\begin{align}
& F_{\gamma_{(R-k+1:R)}}(\gamma)=\sum_{m=k}^{R} \sum_{S_{m}} \prod_{r=1}^{m} F_{\gamma_{j_r}}(\gamma)  \times  \prod_{r=m+1}^{R}\left[1-F_{\gamma_{j_r}}(\gamma)\right], \ k=1,2,\cdots,R,
\label{cdf_kth_max_exact}
\end{align}
where the summation $S_{m}$ is over all the permutations $\left(j_{1}, \ldots, j_{R}\right)$ of $1, \ldots, R$ for which $j_{1}<\cdots<j_{m}$ and $j_{m+1}<\cdots<j_{R}$. A simpler asymptotic CDF of $\gamma_{(R-k+1:R)}$ can be computed with the help of EVT.
Asymptotic order statistics can be derived with the help of the results presented in \cite{barakat2002limit}. For ease of understanding, we will repeat an important Lemma here
   \begin{lemma} \label{thm_order_stat} \color{black}
        Assume that for suitable normalizing constants $a_{R}>0,$ $b_{R}$
        \begin{equation}
        \delta_{R}=\max _{1 \leq r \leq R} 1-F_{\gamma_r} \left(a_R\gamma+b_R\right) \rightarrow 0 \text { as } R \rightarrow \infty.
        \label{ua1}
        \end{equation}
        Then $\tilde{\phi}_{k:R}(\gamma) = \mathbb{P}\left(\frac{\gamma_{(R-k+1:R)}-b_R}{a_R} \leq \gamma \right)$ converges weakly to a non degenerate distribution function $\tilde{\phi}_{k}\left(\gamma\right)$ if and only if, for all $\gamma$ for which $\tilde{\phi}_{k}\left(\gamma\right)>0$, the limit
        \begin{equation}
        \tilde{u}(\gamma) = \lim _{R \rightarrow \infty} \sum_{r=1}^{R} 1-F_{\gamma_r}(a_R\gamma+b_R) \ \text{is finite,}
        \label{ua2}
        \end{equation}
        and the function
        \begin{equation}\label{kth}
        \tilde{\phi}_{k}\left(\gamma\right) = \sum\limits_{r=0}^{k-1} \frac{\tilde{u}^r(\gamma)}{r!} \exp (-\tilde{u}(\gamma)), \ \text{is a non degenerate distribution.}
        \end{equation}
        The actual limit of $\tilde{\gamma}_{(R-k+1:R)}=\frac{\gamma_{(R-k+1: R)}-b_{R}}{a_{R}} $ is the one given in (\ref{kth}). 
    \end{lemma}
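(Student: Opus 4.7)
The plan is to reduce the statement about order statistics to a statement about sums of independent Bernoulli indicators and then invoke a Poisson limit theorem. First, I would rewrite the event $\{\gamma_{(R-k+1:R)} \le a_R\gamma + b_R\}$ as $\{N_R(\gamma) \le k-1\}$, where
\begin{equation*}
N_R(\gamma) = \sum_{r=1}^{R} \mathbb{I}\{\gamma_r > a_R\gamma + b_R\}
\end{equation*}
counts how many of the $\gamma_r$'s exceed the threshold $a_R\gamma + b_R$. This is an elementary but crucial bijection: the $k$-th largest lies at or below the threshold iff strictly fewer than $k$ exceed it. Consequently
\begin{equation*}
\tilde{\phi}_{k:R}(\gamma) = \mathbb{P}\bigl(N_R(\gamma) \le k-1\bigr) = \sum_{j=0}^{k-1}\mathbb{P}\bigl(N_R(\gamma) = j\bigr).
\end{equation*}

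Next, I would show that $N_R(\gamma)$ converges in distribution to a Poisson random variable with mean $\tilde{u}(\gamma)$. The indicators in the sum are independent Bernoullis with success probabilities $p_r^{(R)} = 1 - F_{\gamma_r}(a_R\gamma+b_R)$. Condition (\ref{ua1}) gives $\max_r p_r^{(R)} \to 0$, ensuring that no single summand carries appreciable mass, and condition (\ref{ua2}) gives $\sum_r p_r^{(R)} \to \tilde{u}(\gamma) < \infty$. These are precisely the hypotheses of the classical Poisson limit theorem for triangular arrays of independent indicators, so $N_R(\gamma) \overset{D}{\to} \mathrm{Poisson}(\tilde{u}(\gamma))$. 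Summing the Poisson point masses from $0$ to $k-1$ immediately yields the desired limit
\begin{equation*}
\tilde{\phi}_k(\gamma) = \sum_{r=0}^{k-1}\frac{\tilde{u}^r(\gamma)}{r!}\exp(-\tilde{u}(\gamma)),
\end{equation*}
which proves the sufficiency direction. Non-degeneracy is inherited from the maximum case (Lemma \ref{thm_max}) since $\tilde{\phi}_1(\gamma) = \exp(-\tilde{u}(\gamma)) = \tilde{G}_{max}(\gamma)$ and the higher $\tilde{\phi}_k$ differ only by a polynomial factor in $\tilde{u}$.

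The main obstacle is the converse direction: showing that if $\tilde{\phi}_{k:R}(\gamma)$ converges weakly to a non-degenerate limit, then $\tilde{u}(\gamma)$ must exist and be finite at every $\gamma$ with $\tilde{\phi}_k(\gamma)>0$. The trick I would use is to exploit the identity $1 - \tilde{\phi}_{1:R}(\gamma) = \mathbb{P}(N_R(\gamma) \ge 1)$ together with the inequality
\begin{equation*}
\sum_{r=1}^{R} p_r^{(R)} \;\le\; -\log\prod_{r=1}^{R}\bigl(1 - p_r^{(R)}\bigr)\cdot\bigl(1 + o(1)\bigr),
\end{equation*}
valid when $\max_r p_r^{(R)} \to 0$, to bound $\sum_r p_r^{(R)}$ by a function of $\tilde{\phi}_{1:R}(\gamma)$ and hence extract a finite limit; a subsequence argument then upgrades this to full convergence. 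Finally, one has to rule out degenerate subsequential limits by passing the non-degeneracy of $\tilde{\phi}_k$ back through the Poisson formula. Once $\tilde{u}(\gamma)$ is established as a finite limit, the sufficiency argument recovers the explicit form in (\ref{kth}). For brevity in the paper, I would handle the sufficiency in full and cite \cite{barakat2002limit} for the necessity, which is where the bulk of the technical work lies.
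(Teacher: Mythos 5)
Your argument is correct in substance, but note that the paper does not prove this lemma at all: its ``proof'' is a one-line citation of \cite{barakat2002limit}, so your exceedance-counting route supplies the content that the paper leaves entirely to the reference. Your sufficiency direction is the standard argument behind that reference: the identity $\{\gamma_{(R-k+1:R)}\le a_R\gamma+b_R\}=\{N_R(\gamma)\le k-1\}$, the law of rare events for a triangular array of independent Bernoulli indicators under (\ref{ua1}) and (\ref{ua2}), and summation of the Poisson masses to get (\ref{kth}); this is sound, and it has the pedagogical advantage of making transparent why the $k$-th order statistic limit is the Poisson tail of the same $\tilde{u}(\gamma)$ that appears in Lemma \ref{thm_max} for $k=1$. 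One caveat on your converse sketch: the hypothesis is weak convergence of $\tilde{\phi}_{k:R}$ for the given $k$, not of $\tilde{\phi}_{1:R}$, so for $k\ge 2$ the bound through $1-\tilde{\phi}_{1:R}(\gamma)$ is not directly available; the cleaner route is to take an arbitrary subsequence along which $\sum_r p_r^{(R)}\to u\in[0,\infty]$, note that under (\ref{ua1}) the Poisson approximation forces $\tilde{\phi}_{k:R}(\gamma)$ to converge along it to $\sum_{j=0}^{k-1}u^j e^{-u}/j!$ (interpreted as $0$ when $u=\infty$), and use strict monotonicity and continuity of this function of $u$ to identify a unique finite limit wherever $\tilde{\phi}_k(\gamma)>0$. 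Since you explicitly defer the necessity to \cite{barakat2002limit}, exactly as the paper does for the whole lemma, this is a minor repair rather than a gap.
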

    	
    \begin{proof}
    Please refer \cite{barakat2002limit} for the detailed proof.
    \end{proof} 

    Now, we use Lemma \ref{thm_order_stat} to derive the $k$-th order statistics of i.n.i.d. NCCS random variables with one d.o.f for the system model presented in section II. We can observe that in $(\ref{imp})$, we have already proved that $\tilde{u}(\gamma)$  is finite for the normalizing constants $a_R=\frac{ \tilde{\sigma}^2}{\epsilon}$ and $b_R=\frac{\tilde{\sigma}^2}{\epsilon}  \left[\log(\tilde{R})-c_1 \right]$ satisfying the UA (\ref{uua1}) and (\ref{uua2}). For $k$-th order statistics also, we need to satisfy the equations (\ref{ua1}) and (\ref{ua2}), which are same as UA (\ref{uua1}) and (\ref{uua2}). Hence, we can utilize the obtained $\tilde{u}(\gamma)$ from (\ref{inid_cdf}).  So, if we substitute the derived $\tilde{u}(\gamma)$ in $(\ref{kth})$ and show that $\tilde{\phi}_{k}$ is a non-degenerate distribution, then we obtain the normalized distribution of $k$-th order statistics. We present the results in the following corollary.
    \begin{corollary}\label{kthmax}
       The asymptotic CDF of normalized $k$-th maximum of a sequence of i.n.i.d. non-central chi-square random variables with one d.o.f as $R\to \infty$ is given by, 
       \begin{align}\label{kth_1}
        \tilde{\phi}_{k}\left(\gamma\right) = &\sum\limits_{r=0}^{k-1} \frac{\left ( \tilde{u}(\gamma)\right )^r}{r!} \exp (- \tilde{u}(\gamma))=F_{\tilde{\gamma} _{(R-k+1:R)}}\left ( \gamma \right ) =\frac{\Gamma \left ( k, \tilde{u}(\gamma)\right )}{\Gamma \left ( k \right )},
        \end{align}
        where
        \begin{equation}\label{imp_1}
        \tilde{u}\left ( \gamma \right )=\exp \left ( - \gamma \right ),
        \end{equation}
       for normalizing constants $a_R=\frac{ \tilde{\sigma}^2}{\epsilon}$ and $b_R=\frac{\tilde{\sigma}^2}{\epsilon}  \left[\log(\tilde{R})-c_1 \right]$.
        Assume $\left(\lambda_r, \sigma_r\right)$ takes a finite set of values i.e. $\lambda_r \in\left\{\lambda_1, \lambda_2 \ldots \lambda_P\right\}$ and  $\sigma_r \in\left\{\sigma_1, \sigma_2 \ldots \sigma_P\right\}$  for all  $r \in\{1, \cdots R\}$.
        Further, $R_{i}$ represents the number of times  pair $\left(\lambda_i, \sigma_i\right)$ occurs among  $R$ values. Let $\tilde{\sigma}$ be the largest among $\left\{\sigma_{1},\cdots,\sigma_{P}\right\}$ and $\tilde{\lambda}$  be the largest among $\left\{\lambda_{1},\cdots,\lambda_{P}\right\}$ and also $\tilde{R}$  be the largest among $\left\{R_{1},\cdots,R_{P}\right\}$. Also, $\epsilon$ is the Chernoff parameter $(0< \epsilon < \frac{1}{2})$ and $c_1=\frac{-1}{\tilde{\theta}}\left[ \log\left( 1-2\epsilon  \right)^{-\frac{1}{2}}+\frac{\epsilon}{2\left( 1-2\epsilon \right)}\frac{\tilde{\lambda}}{\tilde{\sigma}^{2}}  \right]$.    
     \end{corollary}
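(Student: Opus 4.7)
The plan is to obtain this corollary as a fairly direct consequence of Lemma \ref{thm_order_stat} together with the heavy lifting already done inside the proof of Theorem \ref{main_thm}. Observe that the conditions (\ref{ua1}) and (\ref{ua2}) required by Lemma \ref{thm_order_stat} coincide with the uniformity assumptions (\ref{uua1}) and (\ref{uua2}) that we verified in establishing Theorem \ref{main_thm}. Since the normalizing constants $a_R=\tilde{\sigma}^{2}/\epsilon$ and $b_R=(\tilde{\sigma}^{2}/\epsilon)[\log(\tilde{R})-c_1]$ already satisfy (\ref{con1}) and (\ref{con2}), and since we explicitly computed the limit
\[
\tilde{u}(\gamma)=\lim_{R\to\infty}\sum_{r=1}^{R}\bigl(1-F_{\gamma_r}(a_R\gamma+b_R)\bigr)=\exp(-\gamma),
\]
which is finite, all the hypotheses of Lemma \ref{thm_order_stat} are in place for this same choice of $(a_R,b_R)$.

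Next, I would substitute $\tilde{u}(\gamma)=\exp(-\gamma)$ directly into the formula (\ref{kth}) of Lemma \ref{thm_order_stat}, giving
\[
\tilde{\phi}_{k}(\gamma)=\sum_{r=0}^{k-1}\frac{(\exp(-\gamma))^{r}}{r!}\exp\bigl(-\exp(-\gamma)\bigr).
\]
Then I would check non-degeneracy: as $\gamma\to-\infty$, $\tilde{u}(\gamma)\to\infty$ and the Poisson-style partial sum is dominated by $\tilde{u}^{k-1}e^{-\tilde{u}}/(k-1)!\to 0$; as $\gamma\to+\infty$, $\tilde{u}(\gamma)\to 0$ and only the $r=0$ term survives, giving $e^{-0}=1$. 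Monotonicity in $\gamma$ is inherited from the Poisson-tail representation. This confirms $\tilde{\phi}_{k}$ is a non-degenerate CDF.

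Finally, I would rewrite the truncated Poisson sum as a regularized upper-incomplete gamma function using the classical identity, valid for positive integer $k$,
\[
\Gamma(k,x)=(k-1)!\,e^{-x}\sum_{r=0}^{k-1}\frac{x^{r}}{r!},
\]
applied at $x=\tilde{u}(\gamma)$, which yields
\[
\tilde{\phi}_{k}(\gamma)=\frac{\Gamma\bigl(k,\tilde{u}(\gamma)\bigr)}{\Gamma(k)},
\]
matching the statement of the corollary.

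In terms of difficulty, there is no real obstacle here: the analytic work (the validity of the UAs, the choice of $c_1$, and the evaluation of $\tilde{u}(\gamma)$) has already been carried out inside the proof of Theorem \ref{main_thm}, and Lemma \ref{thm_order_stat} is used essentially as a black box. The only subtlety worth being explicit about is that the same $a_R$ and $b_R$ serve both the maximum and the $k$-th maximum because Lemma \ref{thm_order_stat} requires exactly the same two uniformity conditions; the $k$-dependence enters only through the deterministic post-processing in (\ref{kth}), not through the normalization. I would therefore keep the corollary's proof short, pointing to (\ref{imp}) for the finiteness of $\tilde{u}(\gamma)$ and to the incomplete-gamma identity for the closed form.
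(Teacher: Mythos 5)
Your proposal is correct and follows essentially the same route as the paper's proof: reuse the normalizing constants and the limit $\tilde{u}(\gamma)=\exp(-\gamma)$ already established in Theorem \ref{main_thm}, invoke Lemma \ref{thm_order_stat}, and convert the truncated Poisson sum into $\Gamma\left(k,\tilde{u}(\gamma)\right)/\Gamma(k)$ via the standard incomplete-gamma identity. Your explicit limit checks for non-degeneracy are slightly more detailed than the paper's one-line remark, but the argument is the same.
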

     \begin{proof}
    The results can be derived by substituting $\tilde{u}(\gamma)$ derived in (\ref{inid_cdf}) into (\ref{kth}) 
     \begin{equation}\label{kth_2}
        \tilde{\phi}_{k}\left(\gamma\right) = \sum\limits_{r=0}^{k-1} \frac{\left (\exp\left ( - \gamma \right )\right )^r}{r!} \exp (-\exp\left ( - \gamma \right )).
    \end{equation}
    The upper incomplete gamma function for integer $k$ \cite{upper_gamma} can be written as
    $$
   \Gamma\left ( k,x \right )=\left ( k-1 \right )!e^{-x}\sum_{r=0}^{k-1}\frac{x^{r}}{r!}.
    $$
    Using the $\Gamma \left ( k,x \right )$ and $\Gamma (x)=(x-1)! $ for integer $x$ \cite{gamma}, we can rewrite (\ref{kth_2}) as
     \begin{equation}\label{kth_3}
        \tilde{\phi}_{k}\left(\gamma\right) = F_{\tilde{\gamma} _{(R-k+1:R)}}\left ( \gamma \right )=\frac{\Gamma \left ( k, \tilde{u}(\gamma)\right )}{\Gamma \left ( k \right )}.
    \end{equation}
   Note that now, $\tilde{\phi}_{k}\left(\gamma\right) $ is a non degenerate function as $\frac{\Gamma \left ( k, \tilde{u}(\gamma)\right )}{\Gamma \left ( k \right )}$ is not a one point distribution. 
    
    \end{proof}
Here also, we can note that for all practical purposes, one can use $ \tilde{u}\left ( \gamma \right )=\sum_{i=1}^{P}\left ( \exp \left ( - \theta_{i}\gamma \right )p_i\right )$ with 
$p_i=\frac{R_{i}}{\tilde{R}^{\theta_{i}}}\exp\left( \log\left( 1-2\epsilon  \right)^{-\frac{1}{2}}+\frac{\epsilon}{2\left( 1-2\epsilon \right)}\frac{\lambda_{i}}{\sigma_{i}^{2}}\right)\exp\left(-\frac{\theta_{i}}{\tilde{\theta}}\left[ \log\left( 1-2\epsilon  \right)^{-\frac{1}{2}}+\frac{\epsilon}{2\left( 1-2\epsilon \right)}\frac{\tilde{\lambda}}{\tilde{\sigma}^{2}} \right] \right).$
We can observe that once we have the results for normalized $k$-th order statistics, we can derive the results for unnormalized $k$-th order statistics by substituting $\gamma$ by $\frac{\gamma-b_R}{a_R}$ in (\ref{kth_1}).
\subsection{Stochastic ordering of $k$-th maximum SNR}
The CDF of $k$-th maximum RV in terms of normalizing constants can be written as
$$
F_{\tilde{\gamma} _{(R-k+1:R)}}\left ( \gamma \right )=\frac{\Gamma \left ( k, \tilde{u}\left ( \frac{\gamma -b_R}{a_R} \right )\right )}{\Gamma \left ( k \right )}.
$$
Let $A$ and $B$ are two $k$-th maximum RVs with normalizing constants $(a_{R}^{A}, b_{R}^{A})$ and $(a_{R}^{B}, b_{R}^{B})$ respectively.
From stochastic ordering, an RV $A$ is stochastically smaller than RV $B$ if \cite{subhash2022asymptotic}
\begin{equation}
   \mathbb{P}\left ( A> z \right )\leq \mathbb{P}\left ( B> z \right ) \quad \forall   z\in\mathbb{R}.
\end{equation}
We can write the same in terms of CDF expressions in the following form
\begin{equation}\label{ineq}
    \Gamma \left ( k, \tilde{u}\left ( \frac{\gamma -b_{R}^B}{a_{R}^B} \right )\right ) \leq \Gamma \left ( k, \tilde{u}\left ( \frac{\gamma -b_{R}^A}{a_{R}^A} \right )\right ).
\end{equation}
As $\Gamma \left ( k,x \right )$ is a decreasing function with respect to its argument $x$, the inequality (\ref{ineq} ) is true if $\tilde{u}\left ( \frac{\gamma -b_{R}^B}{a_{R}^B} \right )\geq \tilde{u}\left ( \frac{\gamma -b_{R}^A}{a_{R}^A} \right )$, i.e.
$$
\exp\left( \frac{-\gamma+b_{R}^B}{a_{R}^B}\right) \geq \exp\left( \frac{-\gamma+b_{R}^A}{a_{R}^A}\right).
$$
We can observe that closed-form expressions of normalizing constants have a one-to-one mapping with the parameters, as shown below
\begin{align*}
&a_R^{I}=\frac{ \tilde{\sigma_{I}}^2}{\epsilon},  \nonumber \\ &
b_{R}^{I}=\frac{\tilde{\sigma}_{I}^2}{\epsilon}  \left[\log(\tilde{R_{I}})+\frac{1}{\tilde{\theta_{I}}}\left( \log\left( 1-2\epsilon  \right)^{-\frac{1}{2}}+\frac{\epsilon}{2\left( 1-2\epsilon \right)}\frac{\tilde{\lambda_{I}}}{\tilde{\sigma_{I}}^{2}}  \right) \right],
\end{align*}
where $I\in \left \{ A, B \right \}$. Here, any parameter changes are reflected in corresponding normalizing constants. Hence, stochastic ordering can be established as the expressions for normalizing constants have the corresponding mapping with the parameters.
\subsection{Average Throughput and Outage Capacity}
In this subsection, we will derive the expression for multiple RIS-aided communication systems' average throughput and outage capacity. Here, we consider the received SNR at the destination as the $k$-th maximum order statistics following the CDF expression presented in  (\ref{kth_1}). Also, here we will consider the SNR random variable to be of type $\gamma_a\gamma_{(R-k+1:R)}$ where $\gamma_a$ is a constant.
\subsubsection{Average Throughput }
Given the CDF of normalized $k$-th maximum $(\gamma_{(R-k+1:R)})$ of i.n.i.d. NCCS RVs with one d.o.f (\ref{kth_1}), we can derive the average throughput at the receiver. The expression for average throughput can be written as 
\begin{equation}\label{er_1}
    C_{R-k+1:R}=\mathbb{E}\left [\log_2\left ( 1+\gamma_a\gamma_{(R-k+1:R)} \right )  \right ].
\end{equation}
The expression in (\ref{er_1}) can be solved using the following numerical integration.
\begin{equation}\label{er_2}
    C_{R-k+1:R}=\int_{0}^{\infty }\log_2\left ( 1+\gamma_a\gamma_{(R-k+1:R)} \right )f_{\gamma_{(R-k+1:R)}}\left ( \gamma  \right )d\gamma.
\end{equation}
Here, $f_{\gamma_{(R-k+1:R)}}\left ( \gamma  \right )$ is the pdf of $k$-th maximum  i.e. $\gamma_{(R-k+1:R)}$. The pdf expression of $f_{\gamma_{(R-k+1:R)}}\left ( \gamma  \right )$ can be obtained by differentiating the CDF expression in (\ref{kth_1}) and can be written as
\begin{align}\label{appr}
    f_{\gamma_{(R-k+1:R)}}\left ( \gamma  \right )= &\frac{1 }{a_R\Gamma \left ( k \right )}\exp\left (- \exp \left ( -\frac{\gamma -b_{R}}{a_{R}} \right ) \right )\left ( \exp \left ( - \frac{\gamma -b_{R}}{a_{R}} \right ) \right )^{k}.
\end{align}
Hence average throughput can be calculated by substituting $f_{\gamma_{(R-k+1:R)}}\left ( \gamma  \right )$ in (\ref{er_2}) with the help of numerical integration routines. 
\subsubsection{Outage Capacity}
Given the CDF of $k$-th maximum (\ref{kth_1}), the outage probability for a threshold $\gamma_{th}$ can be calculated as 
\begin{align}
     P_{out}= & \mathbb{P}\left ( \gamma _{a} \gamma_{(R-k+1:R)}\leq \gamma _{th}\right )=F_{\gamma _{(R-k+1:R)}}\left ( \frac{\gamma _{th}}{\gamma _{a}} \right )=\frac{\Gamma \left ( k, \tilde{u}(\frac{\frac{\gamma_{th}}{\gamma_a} -b_{R}}{a_{R}})\right )}{\Gamma \left ( k \right )}.
\end{align}
Similarly, Outage capacity can be calculated as 
\begin{equation}\label{c_out}
    C_{out}=\log _{2}\left ( 1+\gamma _{th} \right )\left ( 1-F_{\gamma _{(R-k+1:R)}}\left (\gamma _{th}  \right ) \right ).
\end{equation}
We have derived the asymptotic distribution of the maximum of R NCCS RVs with one d.o.f. for the cases of a). i.n.i.d.  b). i.i.d. RVs. Further, we have derived the $k$-th order statistics of NCCS RVs and presented stochastic ordering results. The derived asymptotic distributions are used to find the multi-RIS system's outage capacity and average throughput. In the next section, we will see how these asymptotic results serve as approximations in simulations even when $R$ is not tending to infinity.
\section{Simulation Results} 
 In the simulations, we consider R RISs, and each RIS takes the number of reflecting elements from a finite set ${N_1, N_2, N_3}$. As per the system model, we can observe that received SNR $\gamma^r$ at the destination follows an NCCS distribution with one d.o.f with the parameters $\lambda_r =\left ( \frac{N_r\pi }{4} \right )^{2}$ and $\sigma_r ^{2}=N_r\left ( 1-\frac{\pi ^{2}}{16} \right )$.
 \begin{figure}[h]
    \centering
    \includegraphics[scale=0.5]{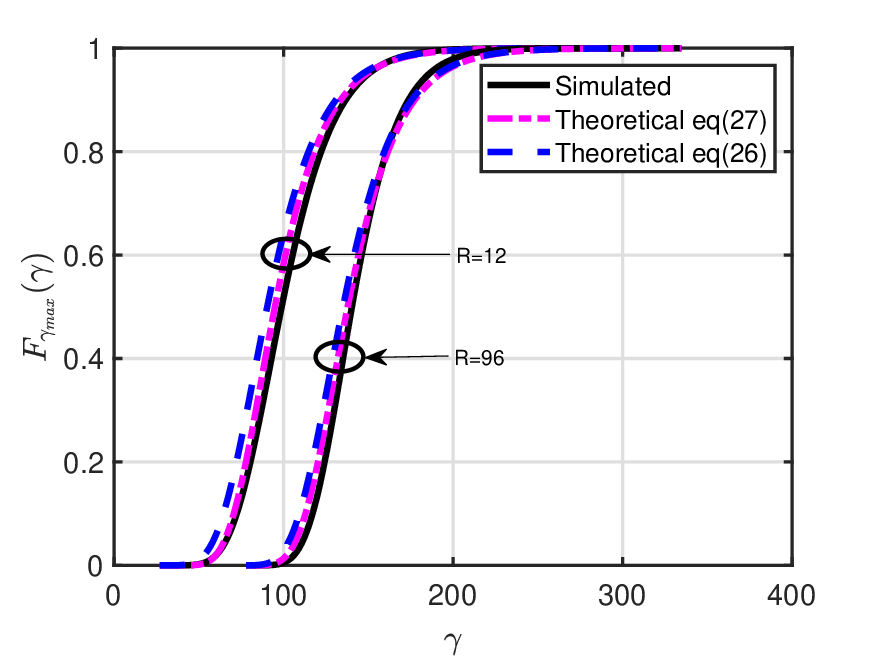}
    \caption{ CDF of $\gamma_{max}$ (equations  (\ref{inid_cdf}) ,(\ref{cdf}))  for i.n.i.d. RVs with $N_1=10$, $N_2=8$, and $N_3=6$}
    \label{fig:inid_comp}
\end{figure}
 In the simulations, we consider $R$ i.n.i.d NCCS RVs with one d.o.f with the CDF given in (\ref{nccs_cdf}). Here we consider $R_i$ RVs with parameters $(\lambda_i, \sigma_i)$ where $i \in \left \{ 1,\cdots, P \right \}$ such that $ \sum_{i=1}^{P} R_{i}=R$. We compare the theoretical and empirical CDFs of maximum order statistics. In Fig. \ref{fig:inid_comp}, we compare equation (\ref{inid_cdf}) with equation (\ref{cdf}) and simulated CDF. Note, for finite $R$ our derived equation (\ref{cdf}) will always be close to simulated CDF. Hence, for all other figures, we compare the simulated CDF with the derived equation (\ref{cdf}) when we discuss them.
\subsection{i.n.i.d. results}
\begin{figure}[h]
    \centering
    \includegraphics[scale=0.5]{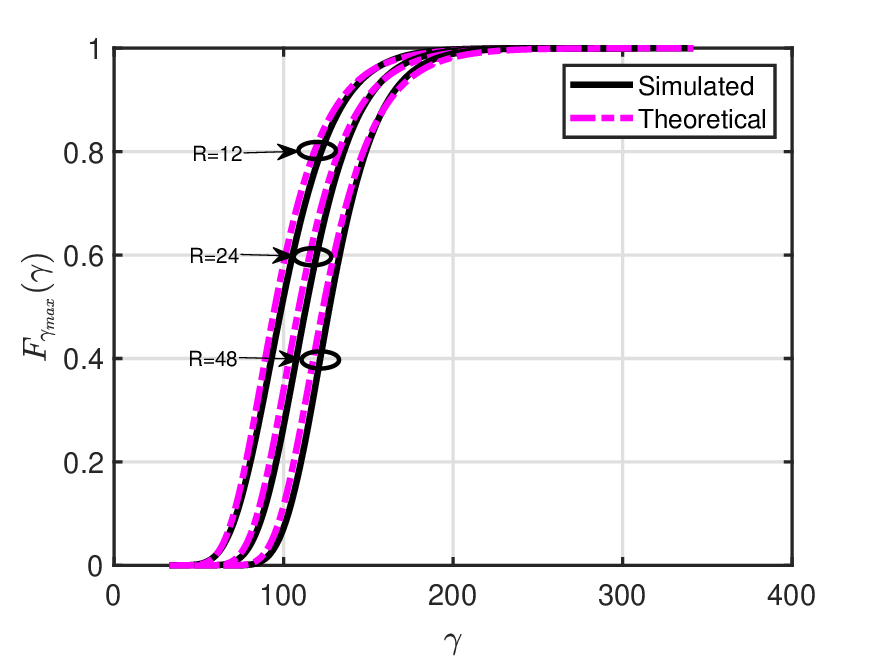}
    \caption{CDF of $\gamma_{max}$ for i.n.i.d. RVs with $N_1=10$, $N_2=8$, and $N_3=6$}
    \label{fig:inid_1}
\end{figure}
\begin{figure}[h]
    \centering
    \includegraphics[scale=0.5]{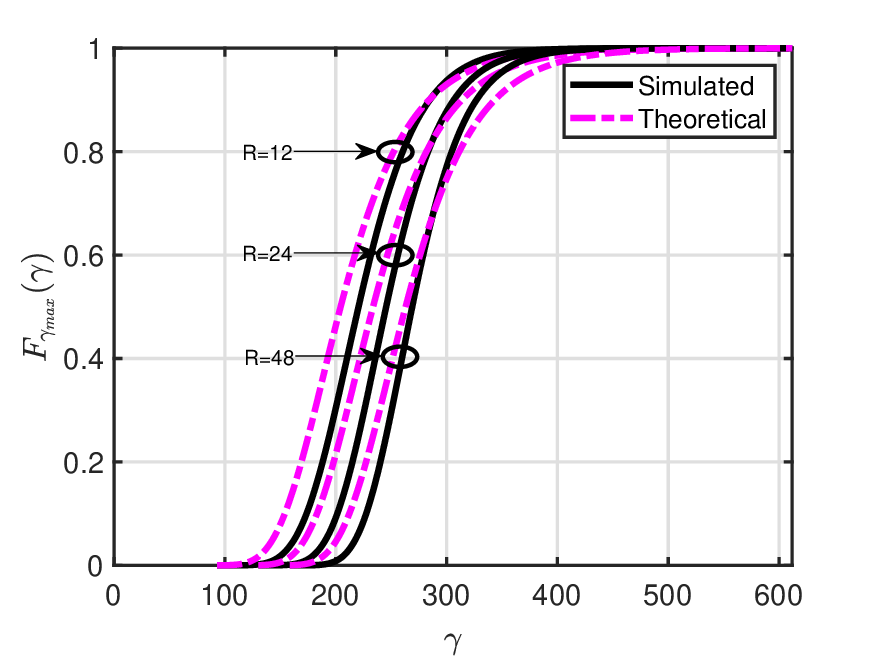}
    \caption{CDF of $\gamma_{max}$ for i.n.i.d. RVs with $N_1=15$, $N_2=13$, and $N_3=11$}
    \label{fig:inid_2}
\end{figure}
In Fig. \ref{fig:inid_1}, we present the CDFs of maximum order statistics for the values of $R=12$, $24$, and $48$ with the following number of reflecting elements $N_1=10$, $N_2=8$, and $N_3=6$ with $R_1=\frac{R}{3}$, $R_2=\frac{R}{3}$, and $R_3=\frac{R}{3}$.
We assume $\epsilon =\sqrt{\frac{\tilde{\sigma }}{\tilde{\lambda }}}$ as the constant throughout the simulations. Fig. \ref{fig:inid_2} presents the results assuming $N_1=15$, $N_2=13$, and $N_3=11$ as the number of reflecting elements. In Fig. \ref{fig:inid_2} we consider the values $R=12$, $24$, and $48$  with $R_1=\frac{R}{2}$, $R_2=\frac{R}{4}$, and $R_3=\frac{R}{4}$.
In Fig. \ref{fig:inid_1} and Fig. \ref{fig:inid_2}, the solid line presents the results of simulated CDF, and the dashed line presents the results of the theoretical CDF of maximum order statistics. Here, we can observe that, in both cases (Fig. \ref{fig:inid_1},\ref{fig:inid_2}), simulated and theoretical values are close to each other, and we can also observe that as the number of RVs ($R$) increases, we are getting better results. Also, from the simulations, we can observe that even for small values of $R$, the derived results are in good agreement with the simulated results. The results of Fig. \ref{fig:inid_1} are better than Fig. \ref{fig:inid_2} as the approximation of the Marcum-Q function in (\ref{asy}) is good for lower values of $N$.

\subsection{i.i.d.results}
\begin{figure}[h]
    \centering
    \includegraphics[scale=0.5]{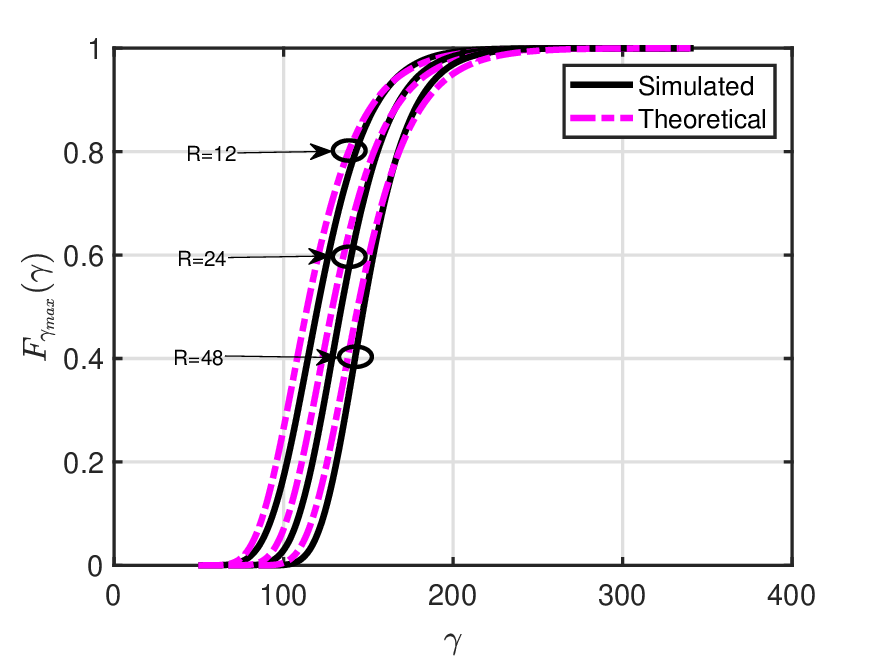}
    \caption{CDF of $\gamma_{max}$ for i.i.d. RVs with $N=10$}
    \label{fig:iid_1}
\end{figure}
 \begin{figure}[h]
    \centering
    \includegraphics[scale=0.5]{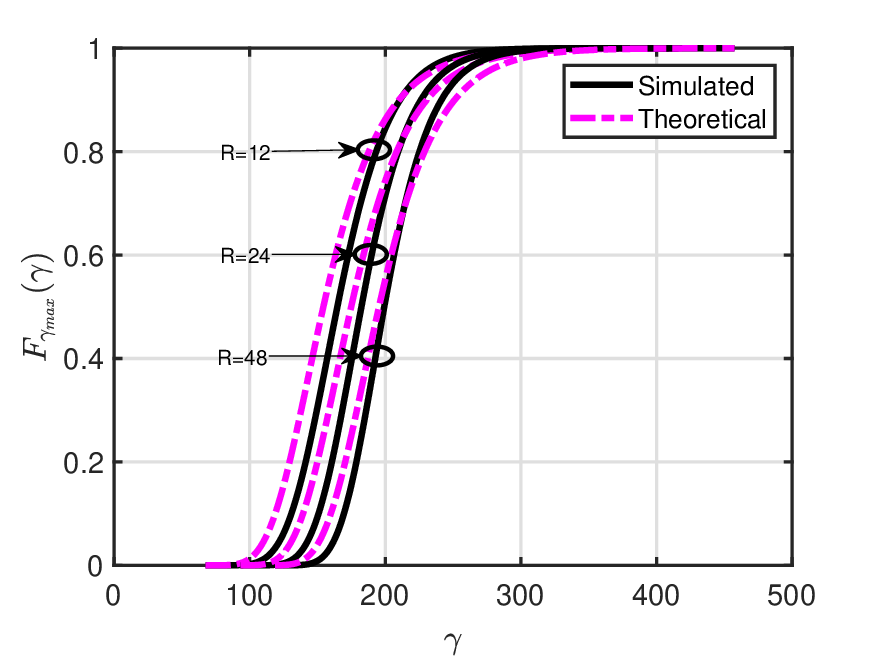}
    \caption{CDF of $\gamma_{max}$ for i.i.d. RVs with $N=12$}
    \label{fig:iid_2}
\end{figure}
 In this section, we validate the results of the section-III corollary \ref{cor}. Here we consider the equal number of reflecting elements in each RIS leading to i.i.d. NCCS RVs with one d.o.f. We assume that all RISs will have the same number of reflecting elements, i.e., $N$, and we consider $R$ i.i.d. NCCS random variables.
Fig. \ref{fig:iid_1} presents the results assuming $N=10$ as the number of reflecting elements for all RISs. In Fig. \ref{fig:iid_1} we consider $R=12$, $24$, and $48$ respectively with $N_1=10$; $1\leq r\leq R $.
Fig. \ref{fig:iid_2} assumes $N=12$ as the number of reflecting elements for all RISs. In Fig. \ref{fig:iid_2} we consider $R=12$, $24$, and $48$ respectively with $N_1=12$; $1\leq r\leq R $. Here also, we can observe that theoretical and empirical CDFs for different values of $R$ are in good agreement, and there is improvement in the results with an increase in $R$. Note all prior work assumes the same number of reflecting elements for all RISs while using EVT to compute the maximum. In \cite{yang2020outage}, the authors assumed i.i.d. NCCS RVs and expressed the limiting distribution as Gumbel CDF. We derived the identical results as a special case and presented them in corollary \ref{cor}.
\begin{figure}[h]
    \centering
    \includegraphics[scale=0.5]{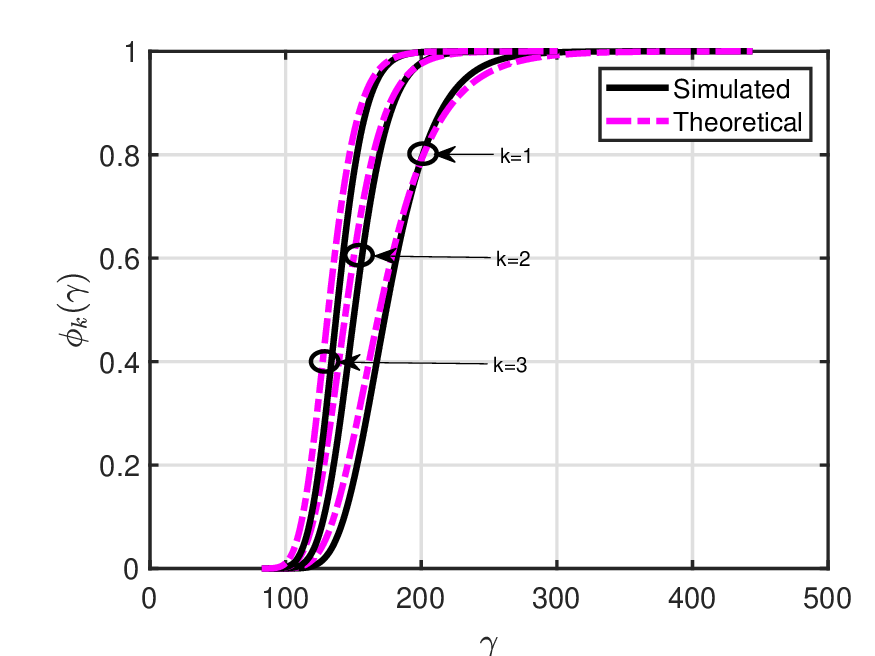}
    \caption{CDF of $\phi_{k}(\gamma)$ for i.n.i.d. RVs with $N_1=12$, $N_2=10$, and $N_3=8$}
    \label{fig:inid_kbl}
\end{figure}
\begin{figure}[h]
    \centering
    \includegraphics[scale=0.5]{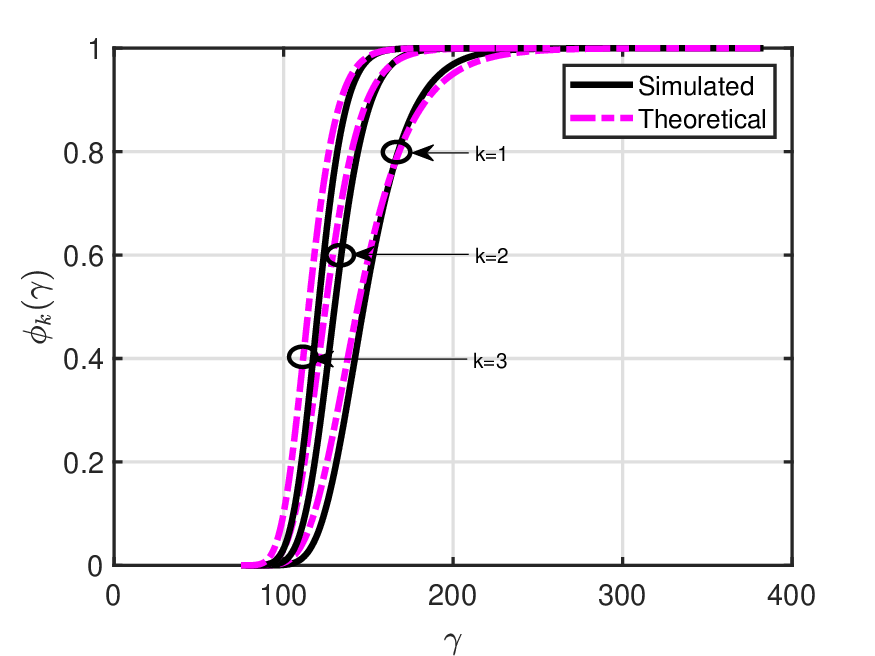}
    \caption{CDF of $\phi_{k}(\gamma)$ for i.i.d. RVs with $N=10$ }
    \label{fig:iid_kbl}
\end{figure}
\subsection{Results for $k$-th maximum}
Here we present the results of the simulation experiments in the case of $k$-th maximum order statistics for both i.n.i.d. and i.i.d. random variables. Here we validate the results of section-III corollary \ref{kthmax}. 
Fig. \ref{fig:inid_kbl} presents the results assuming $N_1=12$, $N_2=10$, and $N_3=8$ where the results are plotted for different values of $k$. Results for i.i.d. case are presented in Fig. \ref{fig:iid_kbl} assuming $N=10$, for different values of $k$. In both cases, we have assumed the number of RVs to be $R=96$, and we can observe that first-order statistics are better than second and third-order statistics.
\subsection{Results for Stochastic ordering}
Fig. \ref{fig:inid_diffgamma} presents the stochastic ordering results. Simulated and theoretical CDFs for maximum order statistics for values of $\gamma_a$=10dB and 30dB, respectively, are plotted considering i.n.i.d. NCCS RVs with one d.o.f. The results in Fig. \ref{fig:inid_diffgamma} are plotted considering the large number of reflecting elements on each RIS ($N_1=60$, $N_2=55$, and $N_3=50$). As the number of reflecting elements increases on each RIS, we can observe that $\lambda_r, \sigma_r^2$ of $\gamma^r$ increases as $\left ( \frac{N_r\pi }{4} \right )^{2},  N_r\left ( 1-\frac{\pi ^{2}}{16} \right )$, respectively as shown in the system model (for $N=60$, $\lambda=2220.7$ and $\sigma=4.79$ ). As the parameter $\lambda$ grows much faster than $\sigma$ for each $\gamma^r$, we can observe that the CDF of $\gamma_{max}$ becomes steeper, as shown in Fig. \ref{fig:inid_diffgamma}. The results in Fig. \ref{fig:inid_diffgamma} also show that even for a large number of reflecting elements, simulated and theoretical CDFs are in good agreement. Further, stochastic ordering has not been characterized before for i.n.i.d. RVs.
\begin{figure}[h]
    \centering
    \includegraphics[scale=0.5]{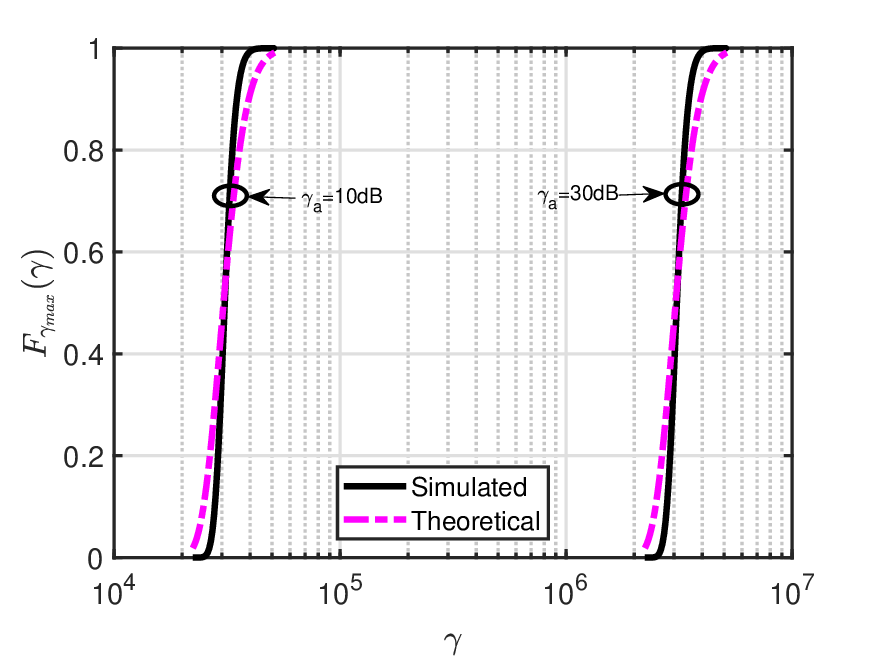}
    \caption{CDF of $\phi_{1}(\gamma)$ for different $\gamma_a$  with $N_1=60$, $N_2=55$, and $N_3=50$ }
    \label{fig:inid_diffgamma}
\end{figure}
\subsection{Results for Outage capacity and Average throughput}
\begin{figure}[h]
    \centering
    \includegraphics[scale=0.5]{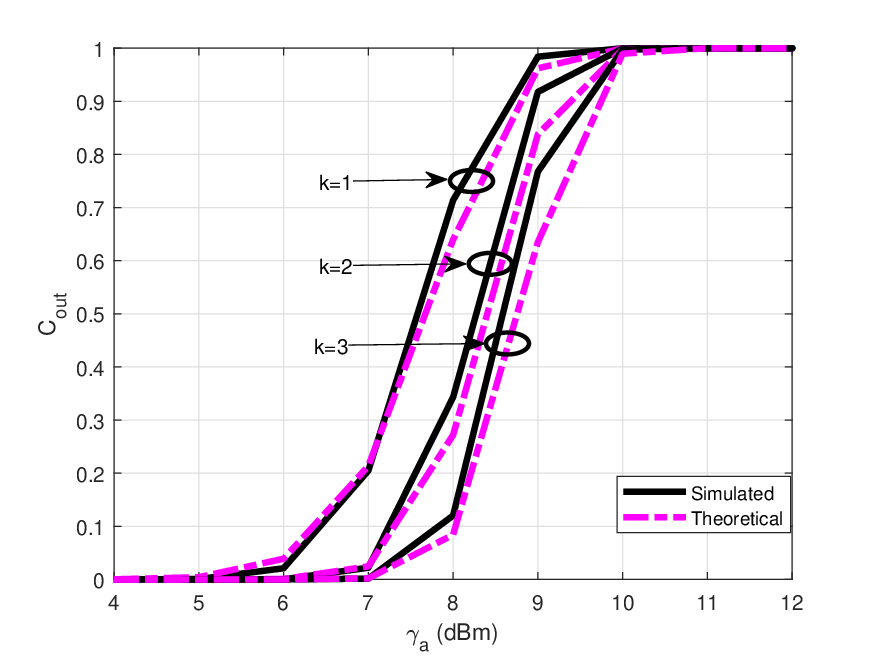}
    \caption{Outage Capacity Vs SNR with $\gamma_{th}$=0 dB}
    \label{fig:outage}
\end{figure}
\begin{figure}[h]
    \centering
    \includegraphics[scale=0.5]{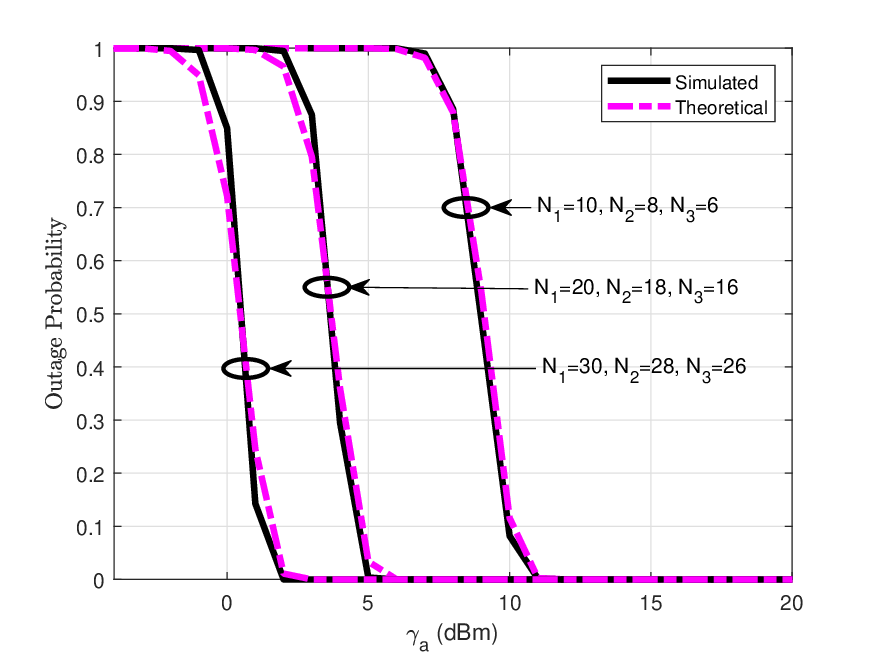}
    \caption{Outage Probability Vs SNR with $\gamma_{th}$=0 dB for different $N$}
    \label{fig:outage_vs_N}
\end{figure}
Here, we present the results for the outage capacity expression derived in (\ref{c_out}) for a multi-RIS communication system. Fig. \ref{fig:outage} presents the results of outage capacity for different values of $k$.
The results are presented considering i.n.i.d. NCCS RVs with $N_1=12$, $N_2=10$, and $N_3=8$ for a threshold of $\gamma_{th}$=0 dB. We can observe that maximum order statistics, i.e., $k=1$, achieve the best performance. Next, Fig. \ref{fig:outage_vs_N} compares the outage probability of a multi-RIS system for different numbers of reflecting elements $(N_1, N_2, N_3)$. We can observe that CDF expression of $k$-th order statistics involves the terms $\tilde{\lambda}$ and $\tilde{\sigma}$, which in turn depends on the number of reflecting elements as $\lambda_r =\left ( \frac{N_r\pi }{4} \right )^{2}$ and $\sigma_r ^{2}=N_r\left ( 1-\frac{\pi ^{2}}{16} \right )$. Hence, it can be clearly seen that the number of reflecting elements plays a crucial role on the system performance. In Fig. \ref{fig:outage_vs_N}, outage probability is plotted for different numbers of reflecting elements. It can be clearly observed that as the number of reflecting elements is increased, outage probability decreases.
\begin{figure}[h]
    \centering
    \includegraphics[scale=0.5]{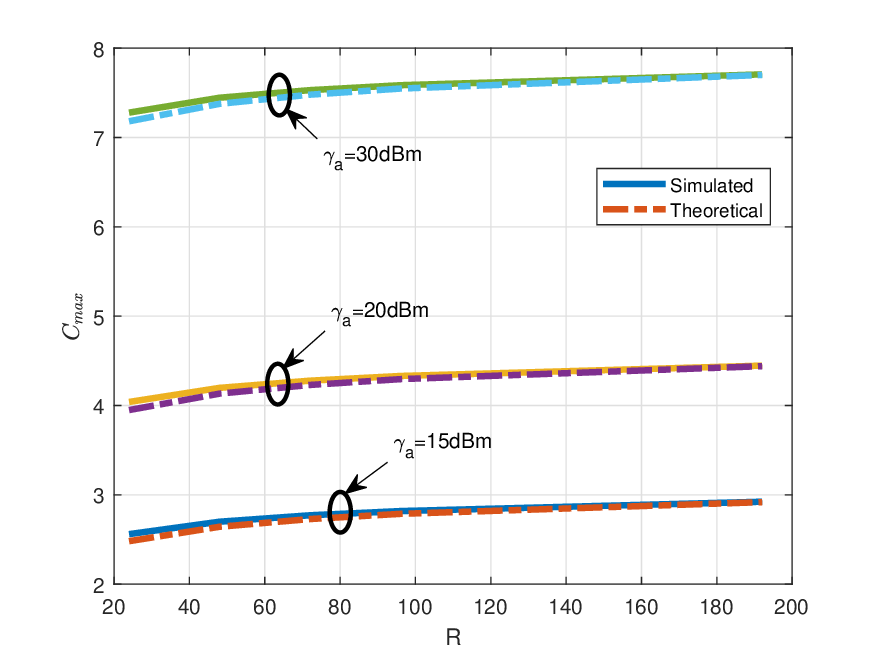}
    \caption{Average throughput Vs R}
    \label{fig:throughput}
\end{figure}
\begin{figure}[h]
    \centering
    \includegraphics[scale=0.5]{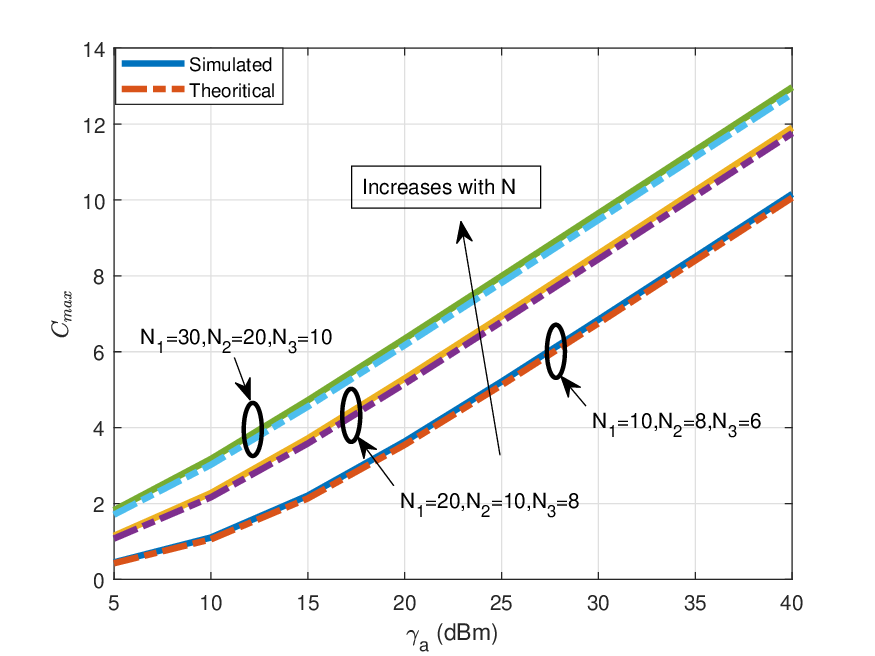}
    \caption{Average throughput of a multi-RIS system for different $N$}
    \label{fig:avg_thr_vs_N}
\end{figure}
\begin{figure}[h]
    \centering
    \includegraphics[scale=0.5]{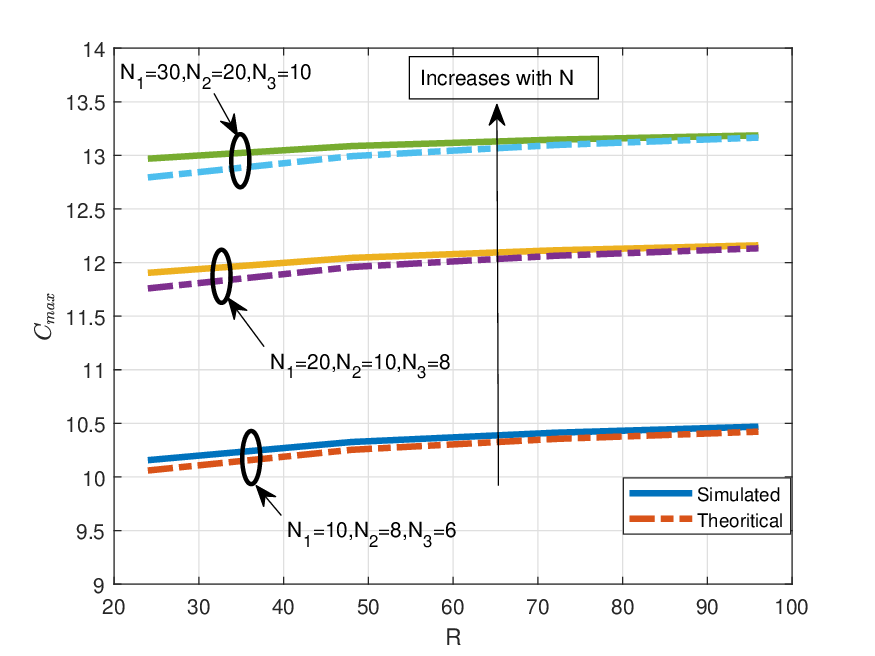}
    \caption{Average throughput of a multi-RIS system for different $N$ and $R$}
    \label{fig:avg_thr_vs_NR}
\end{figure}
Fig. \ref{fig:throughput} shows the average throughput for different values of $\gamma_a$ and R for maximum order statistics. We have evaluated the theoretical average throughput with pdf expression in (\ref{appr}). Here we have used $N_1=12$, $N_2=8$, and $N_3=4$ for simulation experiments.\\
Next, Fig. \ref{fig:avg_thr_vs_N} presents the results of average throughput Vs SNR considering different numbers of reflecting elements. We can also observe from the plots that as the number of reflecting elements increases average throughput of the system is also increasing. We have used $R=24$ for this simulation, and the number of reflecting elements used are $(N_1=10, N_2=8, N_3=6)$, $(N_1=20, N_2=10, N_3=8)$, and $(N_1=30, N_2=20, N_3=10)$.\\
Fig. \ref{fig:avg_thr_vs_NR} compares the average throughput of a multi-RIS system for different $R$ and different numbers of reflecting elements. We know that as the number of RVs (R) increases, the simulated and theoretical average throughputs should converge asymptotically. We can observe that convergence happens even with finite values of $R$. We can observe that the average throughput of the RIS-aided system can be improved with more reflecting elements on each RIS. Fig. \ref{fig:avg_thr_vs_NR} presents the results of average throughput using the reflecting elements as $(N_1=10, N_2=8, N_3=6)$, $(N_1=20, N_2=10, N_3=8)$, and $(N_1=30, N_2=20, N_3=10)$.
\section{{Conclusions }}\label{conclusion}
In this paper, we analyzed the performance of a multi-RIS ($R$ RIS) system where the RISs can have a different number of reflecting elements. Assuming the highest SNR link gets selected for communication, we derived the asymptotic distribution of normalized maximum SNR RV. 
We further derived $k$-th order statistics of i.n.i.d. SNR RVs to deal with scenarios where one is interested in selecting the $k$-th best link. Using our results, we provided outage probability and average throughput expressions for a multi-RIS system.
The simulations showed that the derived asymptotic distribution is in good agreement with the exact distribution, even for moderate values of $R$. 
 
    \begin{appendices}
    \end{appendices}

	\bibliographystyle{IEEEtran}
	\bibliography{reference}

\begin{thebibliography}{10}
\providecommand{\url}[1]{#1}
\csname url@samestyle\endcsname
\providecommand{\newblock}{\relax}
\providecommand{\bibinfo}[2]{#2}
\providecommand{\BIBentrySTDinterwordspacing}{\spaceskip=0pt\relax}
\providecommand{\BIBentryALTinterwordstretchfactor}{4}
\providecommand{\BIBentryALTinterwordspacing}{\spaceskip=\fontdimen2\font plus
\BIBentryALTinterwordstretchfactor\fontdimen3\font minus
  \fontdimen4\font\relax}
\providecommand{\BIBforeignlanguage}[2]{{%
\expandafter\ifx\csname l@#1\endcsname\relax
\typeout{** WARNING: IEEEtran.bst: No hyphenation pattern has been}%
\typeout{** loaded for the language `#1'. Using the pattern for}%
\typeout{** the default language instead.}%
\else
\language=\csname l@#1\endcsname
\fi
#2}}
\providecommand{\BIBdecl}{\relax}
\BIBdecl

\bibitem{wu2019intelligent}
Q.~Wu and R.~Zhang, ``Intelligent reflecting surface enhanced wireless network
  via joint active and passive beamforming,'' \emph{IEEE transactions on
  wireless communications}, vol.~18, no.~11, pp. 5394--5409, 2019.

\bibitem{wu2019beamforming}
Q.~Wu and R.~Zhang, ``Beamforming optimization for wireless network aided by
  intelligent reflecting surface with discrete phase shifts,'' \emph{IEEE
  Transactions on Communications}, vol.~68, no.~3, pp. 1838--1851, 2019.

\bibitem{abeywickrama2020intelligent}
S.~Abeywickrama, R.~Zhang, Q.~Wu, and C.~Yuen, ``Intelligent reflecting
  surface: Practical phase shift model and beamforming optimization,''
  \emph{IEEE Transactions on Communications}, vol.~68, no.~9, pp. 5849--5863,
  2020.

\bibitem{pan2020multicell}
C.~Pan, H.~Ren, K.~Wang, W.~Xu, M.~Elkashlan, A.~Nallanathan, and L.~Hanzo,
  ``Multicell mimo communications relying on intelligent reflecting surfaces,''
  \emph{IEEE Transactions on Wireless Communications}, vol.~19, no.~8, pp.
  5218--5233, 2020.

\bibitem{pan2020intelligent}
C.~Pan, H.~Ren, K.~Wang, M.~Elkashlan, A.~Nallanathan, J.~Wang, and L.~Hanzo,
  ``Intelligent reflecting surface aided {MIMO} broadcasting for simultaneous
  wireless information and power transfer,'' \emph{IEEE Journal on Selected
  Areas in Communications}, vol.~38, no.~8, pp. 1719--1734, 2020.

\bibitem{zhang2020capacity}
S.~Zhang and R.~Zhang, ``Capacity characterization for intelligent reflecting
  surface aided {MIMO} communication,'' \emph{IEEE Journal on Selected Areas in
  Communications}, vol.~38, no.~8, pp. 1823--1838, 2020.

\bibitem{zhou2020robust}
G.~Zhou, C.~Pan, H.~Ren, K.~Wang, M.~Di~Renzo, and A.~Nallanathan, ``Robust
  beamforming design for intelligent reflecting surface aided {MISO}
  communication systems,'' \emph{IEEE Wireless Communications Letters}, vol.~9,
  no.~10, pp. 1658--1662, 2020.

\bibitem{di2020hybrid}
B.~Di, H.~Zhang, L.~Song, Y.~Li, Z.~Han, and H.~V. Poor, ``Hybrid beamforming
  for reconfigurable intelligent surface based multi-user communications:
  Achievable rates with limited discrete phase shifts,'' \emph{IEEE Journal on
  Selected Areas in Communications}, vol.~38, no.~8, pp. 1809--1822, 2020.

\bibitem{wu2020joint}
Q.~Wu and R.~Zhang, ``Joint active and passive beamforming optimization for
  intelligent reflecting surface assisted {SWIPT} under {QoS} constraints,''
  \emph{IEEE Journal on Selected Areas in Communications}, vol.~38, no.~8, pp.
  1735--1748, 2020.

\bibitem{huang2020reconfigurable}
C.~Huang, R.~Mo, and C.~Yuen, ``Reconfigurable intelligent surface assisted
  multiuser {MISO} systems exploiting deep reinforcement learning,'' \emph{IEEE
  Journal on Selected Areas in Communications}, vol.~38, no.~8, pp. 1839--1850,
  2020.

\bibitem{li2020reconfigurable}
S.~Li, B.~Duo, X.~Yuan, Y.-C. Liang, and M.~Di~Renzo, ``Reconfigurable
  intelligent surface assisted {UAV} communication: Joint trajectory design and
  passive beamforming,'' \emph{IEEE Wireless Communications Letters}, vol.~9,
  no.~5, pp. 716--720, 2020.

\bibitem{zheng2020intelligent}
B.~Zheng, Q.~Wu, and R.~Zhang, ``Intelligent reflecting surface-assisted
  multiple access with user pairing: {NOMA} or {OMA}?'' \emph{IEEE
  Communications Letters}, vol.~24, no.~4, pp. 753--757, 2020.

\bibitem{gopi2020intelligent}
S.~Gopi, S.~Kalyani, and L.~Hanzo, ``Intelligent reflecting surface assisted
  beam index-modulation for millimeter wave communication,'' \emph{IEEE
  Transactions on Wireless Communications}, vol.~20, no.~2, pp. 983--996, 2020.

\bibitem{taha2021enabling}
A.~Taha, M.~Alrabeiah, and A.~Alkhateeb, ``Enabling large intelligent surfaces
  with compressive sensing and deep learning,'' \emph{IEEE access}, vol.~9, pp.
  44\,304--44\,321, 2021.

\bibitem{yang2020deep}
H.~Yang, Z.~Xiong, J.~Zhao, D.~Niyato, L.~Xiao, and Q.~Wu, ``Deep reinforcement
  learning-based intelligent reflecting surface for secure wireless
  communications,'' \emph{IEEE Transactions on Wireless Communications},
  vol.~20, no.~1, pp. 375--388, 2020.

\bibitem{charishma2021outage}
M.~Charishma, A.~Subhash, S.~Shekhar, and S.~Kalyani, ``Outage probability
  expressions for an {IRS}-assisted system with and without source-destination
  link for the case of quantized phase shifts in $\kappa$--$\mu$ fading,''
  \emph{IEEE Transactions on Communications}, vol.~70, no.~1, pp. 101--117,
  2021.

\bibitem{subhash2023max}
A.~Subhash, A.~Kammoun, A.~Elzanaty, S.~Kalyani, Y.~H. Al-Badarneh, and M.-S.
  Alouini, ``Max-{M}in {SINR} optimization for {RIS}-aided uplink
  communications with green constraints,'' \emph{IEEE Wireless Communications
  Letters}, 2023.

\bibitem{subhash2023optimal}
A.~Subhash, A.~Kammoun, A.~Elzanaty, S.~Kalyani, Y.~H. Al-Badarneh, and M.-S.
  Alouini, ``Optimal phase shift design for fair allocation in {RIS} aided
  uplink network using statistical {CSI},'' \emph{IEEE Journal on Selected
  Areas in Communications}, 2023.

\bibitem{shekhar2022instantaneous}
S.~Shekhar, A.~Subhash, T.~Kella, and S.~Kalyani, ``Instantaneous channel
  oblivious phase shift design for an {IRS}-assisted {SIMO} system with
  quantized phase shift,'' \emph{arXiv preprint arXiv:2211.03317}, 2022.

\bibitem{basar2019wireless}
E.~Basar, M.~Di~Renzo, J.~De~Rosny, M.~Debbah, M.-S. Alouini, and R.~Zhang,
  ``Wireless communications through reconfigurable intelligent surfaces,''
  \emph{IEEE access}, vol.~7, pp. 116\,753--116\,773, 2019.

\bibitem{yang2020coverage}
L.~Yang, Y.~Yang, M.~O. Hasna, and M.-S. Alouini, ``Coverage, probability of
  {SNR} gain, and {DOR} analysis of {RIS}-aided communication systems,''
  \emph{IEEE Wireless Communications Letters}, vol.~9, no.~8, pp. 1268--1272,
  2020.

\bibitem{do2021multi}
T.~N. Do, G.~Kaddoum, T.~L. Nguyen, D.~B. Da~Costa, and Z.~J. Haas,
  ``Multi-{RIS}-aided wireless systems: Statistical characterization and
  performance analysis,'' \emph{IEEE Transactions on Communications}, vol.~69,
  no.~12, pp. 8641--8658, 2021.

\bibitem{phan2022performance}
V.-D. Phan, B.~C. Nguyen, T.~M. Hoang, T.~N. Nguyen, P.~T. Tran, B.~V. Minh,
  and M.~Voznak, ``Performance of cooperative communication system with
  multiple reconfigurable intelligent surfaces over {N}akagami-m fading
  channels,'' \emph{IEEE Access}, vol.~10, pp. 9806--9816, 2022.

\bibitem{xie2022downlink}
Z.~Xie, W.~Yi, X.~Wu, Y.~Liu, and A.~Nallanathan, ``Downlink multi-{RIS} aided
  transmission in backhaul limited networks,'' \emph{IEEE Wireless
  Communications Letters}, vol.~11, no.~7, pp. 1458--1462, 2022.

\bibitem{tran2022exploiting}
P.~T. Tran, B.~C. Nguyen, T.~M. Hoang, X.~H. Le \emph{et~al.}, ``Exploiting
  multiple {RIS}s and direct link for performance enhancement of wireless
  systems with hardware impairments,'' \emph{IEEE Transactions on
  Communications}, vol.~70, no.~8, pp. 5599--5611, 2022.

\bibitem{tran2022combining}
P.~T. Tran, B.~C. Nguyen, T.~M. Hoang, N.~Van~Vinh \emph{et~al.}, ``Combining
  multi-{RIS} and relay for performance improvement of multi-user {NOMA}
  systems,'' \emph{Computer Networks}, vol. 217, p. 109353, 2022.

\bibitem{agbogla2023adaptive}
E.~K. Agbogla, I.~Trigui, K.~Humadi, W.~Ajib, and W.-P. Zhu, ``Adaptive
  coordinated direct and multi-{RIS} transmissions for ultrareliable terahertz
  systems,'' \emph{IEEE Transactions on Vehicular Technology}, 2023.

\bibitem{nguyen2023secrecy}
B.~C. Nguyen, Q.-N. Van, L.~T. Dung, T.~M. Hoang, N.~V. Vinh, and G.~T. Luu,
  ``Secrecy performance of multi-{RIS}-assisted wireless systems,''
  \emph{Mobile Networks and Applications}, pp. 1--14, 2023.

\bibitem{nguyen2023performance}
B.~C. Nguyen, N.~T. Xuan, H.~T. Manh, H.~L.~T. Thanh, and P.~T. Hiep,
  ``Performance analysis for multi-{RIS UAV NOMA} mmwave communication
  systems,'' \emph{Wireless Networks}, vol.~29, no.~2, pp. 761--773, 2023.

\bibitem{fang2022optimum}
Y.~Fang, S.~Atapattu, H.~Inaltekin, and J.~Evans, ``Optimum reconfigurable
  intelligent surface selection for wireless networks,'' \emph{IEEE
  Transactions on Communications}, vol.~70, no.~9, pp. 6241--6258, 2022.

\bibitem{aldababsa2023multiple}
M.~Aldababsa, A.~M. Salhab, A.~A. Nasir, M.~H. Samuh, and D.~B. da~Costa,
  ``Multiple {RIS}s-aided networks: Performance analysis and optimization,''
  \emph{IEEE Transactions on Vehicular Technology}, 2023.

\bibitem{hindustani2023outage}
R.~Hindustani, D.~Dixit, S.~Sharma, and V.~Bhatia, ``Outage probability of
  multiple-{IRS}-assisted {SISO} wireless communications over rician fading,''
  \emph{Physical Communication}, vol.~59, p. 102102, 2023.

\bibitem{tam2023improving}
D.~T. Tam, N.~Van~Vinh, and B.~C. Nguyen, ``Improving the performance of
  multi-{IRS} aided millimeter-wave communication systems by transmit antenna
  selection,'' \emph{Physical Communication}, vol.~56, p. 101957, 2023.

\bibitem{yang2020outage}
L.~Yang, Y.~Yang, D.~B. da~Costa, and I.~Trigui, ``Outage probability and
  capacity scaling law of multiple {RIS}-aided networks,'' \emph{IEEE Wireless
  Communications Letters}, vol.~10, no.~2, pp. 256--260, 2020.

\bibitem{park2008performance}
D.~Park and S.~Y. Park, ``Performance analysis of multiuser diversity under
  transmit antenna correlation,'' \emph{IEEE transactions on communications},
  vol.~56, no.~4, pp. 666--674, 2008.

\bibitem{seo2009new}
W.~Seo, H.~Song, J.~Lee, and D.~Hong, ``A new asymptotic analysis of throughput
  enhancement from selection diversity using a high {SNR} approach in multiuser
  systems,'' \emph{IEEE Transactions on Wireless Communications}, vol.~8,
  no.~1, pp. 55--59, 2009.

\bibitem{xia2013spectrum}
M.~Xia and S.~Aissa, ``Spectrum-sharing multi-hop cooperative relaying:
  Performance analysis using extreme value theory,'' \emph{IEEE transactions on
  wireless communications}, vol.~13, no.~1, pp. 234--245, 2013.

\bibitem{haider2015spectral}
F.~Haider, C.-X. Wang, H.~Haas, E.~Hepsaydir, X.~Ge, and D.~Yuan, ``Spectral
  and energy efficiency analysis for cognitive radio networks,'' \emph{IEEE
  Transactions on Wireless Communications}, vol.~14, no.~6, pp. 2969--2980,
  2015.

\bibitem{liu2019dynamic}
C.-F. Liu, M.~Bennis, M.~Debbah, and H.~V. Poor, ``Dynamic task offloading and
  resource allocation for ultra-reliable low-latency edge computing,''
  \emph{IEEE Transactions on Communications}, vol.~67, no.~6, pp. 4132--4150,
  2019.

\bibitem{samarakoon2019distributed}
S.~Samarakoon, M.~Bennis, W.~Saad, and M.~Debbah, ``Distributed federated
  learning for ultra-reliable low-latency vehicular communications,''
  \emph{IEEE Transactions on Communications}, vol.~68, no.~2, pp. 1146--1159,
  2019.

\bibitem{subhash2020transmit}
A.~Subhash, M.~Srinivasan, S.~Kalyani, and L.~Hanzo, ``Transmit power policy
  and ergodic multicast rate analysis of cognitive radio networks in
  generalized fading,'' \emph{IEEE Transactions on Communications}, vol.~68,
  no.~6, pp. 3311--3325, 2020.

\bibitem{kalyani2012asymptotic}
S.~Kalyani and R.~Karthik, ``The asymptotic distribution of maxima of
  independent and identically distributed sums of correlated or non-identical
  gamma random variables and its applications,'' \emph{IEEE transactions on
  communications}, vol.~60, no.~9, pp. 2747--2758, 2012.

\bibitem{subhash2019asymptotic}
A.~Subhash, M.~Srinivasan, and S.~Kalyani, ``Asymptotic maximum order statistic
  for {SIR} in $\kappa-\mu$ shadowed fading,'' \emph{IEEE Transactions on
  Communications}, vol.~67, no.~9, pp. 6512--6526, 2019.

\bibitem{mejzler1969some}
D.~Mejzler and I.~Weissman, ``On some results of {NV} smirnov concerning limit
  distributions for variational series,'' \emph{The Annals of Mathematical
  Statistics}, vol.~40, no.~2, pp. 480--491, 1969.

\bibitem{barakat2013limit}
H.~Barakat, E.~Nigm, and M.~Al-Awady, ``Limit theorems for random maximum of
  independent and non-identically distributed random vectors,''
  \emph{Statistics}, vol.~47, no.~3, pp. 546--557, 2013.

\bibitem{subhash2021cooperative}
A.~Subhash and S.~Kalyani, ``Cooperative relaying in a {SWIPT} network:
  Asymptotic analysis using extreme value theory for non-identically
  distributed {RV}s,'' \emph{IEEE Transactions on Communications}, vol.~69,
  no.~7, pp. 4360--4372, 2021.

\bibitem{subhash2022asymptotic}
A.~Subhash, S.~Kalyani, Y.~H. Al-Badarneh, and M.-S. Alouini, ``On the
  asymptotic performance analysis of the k-th best link selection over
  non-identical non-central chi-square fading channels,'' \emph{IEEE
  Transactions on Communications}, vol.~70, no.~11, pp. 7191--7206, 2022.

\bibitem{yildirim2020modeling}
I.~Yildirim, A.~Uyrus, and E.~Basar, ``Modeling and analysis of reconfigurable
  intelligent surfaces for indoor and outdoor applications in future wireless
  networks,'' \emph{IEEE transactions on communications}, vol.~69, no.~2, pp.
  1290--1301, 2020.

\bibitem{galappaththige2020performance}
D.~L. Galappaththige, D.~Kudathanthirige, and G.~A.~A. Baduge, ``Performance
  analysis of distributed intelligent reflective surfaces for wireless
  communications,'' \emph{arXiv preprint arXiv:2010.12543}, 2020.

\bibitem{molisch2012wireless}
A.~F. Molisch, \emph{Wireless communications}.\hskip 1em plus 0.5em minus
  0.4em\relax John Wiley \& Sons, 2012.

\bibitem{haan2006extreme}
L.~Haan and A.~Ferreira, \emph{Extreme value theory: an introduction}.\hskip
  1em plus 0.5em minus 0.4em\relax Springer, 2006, vol.~3.

\bibitem{david2003order}
H.~A. David and H.~N. Nagaraja, ``{Order Statistics},'' \emph{Encyclopedia of
  Statistical Sciences}, 2003.

\bibitem{barakat2002limit}
H.~Barakat, ``Limit theorems for bivariate extremes of non-identically
  distributed random variables,'' \emph{Applicationes Mathematicae}, vol.~4,
  no.~29, pp. 371--386, 2002.

\bibitem{upper_gamma}
\BIBentryALTinterwordspacing
E.~W. Weisstein, \emph{Incomplete Gamma Function: From MathWorld--A Wolfram Web
  Resource}, accessed October 01, 2020). [Online]. Available:
  \url{https://mathworld.wolfram.com/IncompleteGammaFunction.html}
\BIBentrySTDinterwordspacing

\bibitem{gamma}
\BIBentryALTinterwordspacing
E.~W. Weisstein, \emph{Gamma Function: From MathWorld--A Wolfram Web Resource},
  accessed October 01, 2020). [Online]. Available:
  \url{https://mathworld.wolfram.com/GammaFunction.html}
\BIBentrySTDinterwordspacing

\end{thebibliography}

\end{document}